\documentclass[sigconf,nonacm]{aamas}

\usepackage{balance} 
\usepackage{algorithm}
\usepackage{algorithmic}
\usepackage[table]{xcolor}
\usepackage{colortbl}
\usepackage{amsthm}
\usepackage{amsmath}
\usepackage{graphicx}
\usepackage{subcaption}
\usepackage{xparse} 
\usepackage{xspace} 
\usepackage{thmtools}
\usepackage{thm-restate}
\usepackage{cleveref}
\usepackage{xcolor}
\usepackage{soul}
\usepackage[normalem]{ulem} 
\usepackage{cancel}

\newif\ifhighlight
\highlightfalse   

\doi{LGSS5881}

\makeatletter
\gdef\@copyrightpermission{
  \begin{minipage}{0.2\columnwidth}
   
  \end{minipage}\hfill
  \begin{minipage}{0.8\columnwidth}
 
  \end{minipage}
  \vspace{5pt}
}
\makeatother

\setcopyright{ifaamas}
\acmConference[AAMAS '26]{Proc.\@ of the 25th International Conference
on Autonomous Agents and Multiagent Systems (AAMAS 2026)}{May 25 -- 29, 2026}
{Paphos, Cyprus}{C.~Amato, L.~Dennis, V.~Mascardi, J.~Thangarajah (eds.)}
\copyrightyear{2026}
\acmYear{2026}
\acmDOI{}
\acmPrice{}
\acmISBN{}

\acmSubmissionID{13}

\title{Maximin Share Guarantees via Limited Cost-Sensitive Sharing}

\author{Hana Salavcova}
\affiliation{
  \institution{Charles University}
  \city{Prague}
  \country{Czech Republic}}
\email{salavcovah@gmail.com}

\author{Martin Černý}
\affiliation{
  \institution{Charles University}
  \city{Prague}
  \country{Czech Republic}}
\email{cerny@kam.mff.cuni.cz}

\author{Arpita Biswas}
\affiliation{
  \institution{Rutgers University}
  \city{New Jersey}
  \country{United States}}
\email{a.biswas@rutgers.edu}

\begin{abstract}
We study the problem of fairly allocating indivisible goods when limited sharing is allowed, that is, each good may be allocated to up to $k$ agents, while incurring a cost for sharing. While classic maximin share (MMS) allocations may not exist in many instances, we demonstrate that allowing controlled sharing can restore fairness guarantees that are otherwise unattainable in certain scenarios. {(1)~Our first contribution shows that exact maximin share (MMS) allocations are guaranteed to exist whenever goods are allowed to be cost-sensitively shared among at least half of the agents and the number of agents is even; for odd numbers of agents, we obtain a slightly weaker MMS guarantee. (2)}~{We further design a Shared Bag-Filling Algorithm that guarantees a $(1 - C)(k - 1)$-approximate MMS allocation, where $C$ is the maximum cost of sharing a good. Notably, when $(1 - C)(k - 1) \geq 1$, our algorithm recovers an exact MMS allocation.} {(3)}~We additionally {introduce} the \textit{Sharing Maximin Share} (SMMS) fairness notion, a natural extension of MMS to the $k$-sharing setting. {(4)~We show that SMMS allocations always exist under identical utilities and for instances with two agents. (5)~We construct a counterexample to show the impossibility of the universal existence of an SMMS allocation. (6)}~Finally, we establish a connection between SMMS and constrained MMS (CMMS), yielding approximation guarantees for SMMS via existing CMMS results. {These contributions provide deep theoretical insights for the problem of fair resource allocation when a limited sharing of resources are allowed in multi-agent environments.}
\end{abstract}

\keywords{fair allocation; maximin share; shared resources; cost of sharing; fair division}

\newtheorem{observation}{Observation}

\NewDocumentCommand{\removeandswap}{ O{A} O{g} O{i} O{j} O{} }{%
  \IfNoValueTF{#5}
    {\ensuremath{(#1 \setminus^{#5} #2)_{#3 \leftrightarrow #4}}}%
    {\ensuremath{(#1 \setminus^{#4} #2)_{#3 \leftrightarrow #4}}}%
}

\newtheorem*{repproposition}{Proposition \ref{prop:smms-for-two-agents}}

\newtheorem*{repproposition2}{Proposition \ref{prop:smms-equiv-cmms}}

\newtheorem*{repproposition3}{Proposition \ref{prop:smms-for-identical-agents}}

\NewDocumentCommand{\MMS}{ O{i} O{n} O{M} }{%
  \ensuremath{\operatorname{MMS}_{#1}^{#2}(#3)}\xspace%
}

\NewDocumentCommand{\MMSi}{ O{i} }{%
  \ensuremath{\operatorname{MMS}_{#1}}\xspace%
}

\NewDocumentCommand{\kMMS}{ O{i} O{k} O{n} O{M} }{%
  \ensuremath{\operatorname{MMS}_{#1}^{#3,#2}(#4)}\xspace%
}

\NewDocumentCommand{\SMMS}{ O{i} O{k} O{n} O{M} }{%
  \ensuremath{\operatorname{SMMS}_{#1}^{#3,#2}(#4)}\xspace%
}

\NewDocumentCommand{\SMMSc}{ O{C} O{i} }{%
  \ensuremath{\operatorname{SMMS}^{#1}_{#2}}\xspace%
}

\NewDocumentCommand{\fullSMMSc}{ O{C} O{i} }{%
  \ensuremath{\operatorname{\overline{SMMS}}^{#1}_{#2}}\xspace%
}

\newcommand{\cellcolorval}[1]{%
  \ifdim #1 pt = 1pt
    \cellcolor{green!50}%
  \else\ifdim #1 pt > 0.89pt
    \cellcolor{green!30}%
  \else\ifdim #1 pt > 0.75pt
    \cellcolor{green!20}%
  \else\ifdim #1 pt > 0.59pt
    \cellcolor{yellow!45}%
  \else\ifdim #1 pt > 0.49pt
    \cellcolor{yellow!30}%
  \else\ifdim #1 pt > 0.35pt
    \cellcolor{red!20}%
  \else\ifdim #1 pt > 0.19pt
    \cellcolor{red!30}%
  \else
    \cellcolor{red!40}%
  \fi\fi\fi\fi\fi\fi\fi
  #1%
}

\begin{document}


\pagestyle{fancy}
\fancyhead{}


\maketitle 


\section{Introduction}

The theory of fair allocation of indivisible items has been receiving significant attention, driven both by theoretical interest and by a wide range of practical applications~\cite{walshfair,Aziz2022,Amanatidis2023}. A substantial body of work has explored various generalizations, relaxations, and constraints that arise in real-world allocation problems. For instance, several studies have addressed settings where not all items must be allocated, leading to models where certain fairness guarantees exist, which is otherwise unattainable~\cite{Caragiannis2019,Chaudhury2021b,Chaudhury2021,Berger2022}. 
These extensions reflect a growing recognition of the need for fair allocation settings and methods that are both flexible and applicable to complex, structured domains.

One of the most prominent fairness notions in the area of fair division of indivisible goods is \emph{maximin share} (MMS) fairness, where each agent is guaranteed a bundle at least as much as their maximin share value. Informally, an agent’s maximin share value represents the maximum utility they can guarantee themselves by partitioning the goods into as many bundles as there are agents, under the assumption that they receive the least valuable bundle. This concept captures a natural and compelling benchmark for fairness, as it guarantees what an agent could ensure for themselves in a worst-case, divide-and-choose scenario. MMS was formally introduced in~\cite{Budish2011} in the context of course allocation. 

Despite its intuitive appeal, MMS allocations are not guaranteed to exist in general. In fact, counterexamples with as few as three agents and a small number of goods have shown that no MMS exists~\cite{Kurokawa2018,Feige2021}. This inherent impossibility has led researchers to focus on approximate MMS allocations, where each agent receives at least a fraction of their MMS value. The study of such approximations has become a central direction in the literature, yielding progressively stronger guarantees over time~\cite{Barman2020,Ghodsi2021,Garg2021,Feige2021} with the best known approximation factor of $\left(\frac{3}{4} + \frac{3}{3836}\right)$ under additive valuations established by~\citet{Akrami2024}.

Much of the prior work on fair allocation assumes that the set of items must be partitioned disjointly among the set of agents, meaning that no good can be shared among multiple agents.  However, many practical scenarios, such as allocation of shared computing access~\cite{jacquet2024sweetspotvm} and community energy storage systems~\cite{chang2022shared}, may allow the sharing of resources in a structured way. {For instance, consider allocating access to high-demand laboratory equipment at a university, such as a high-resolution electron microscope, a specialized DNA sequencer, and a powerful computing server. These devices are expensive, scarce, and cannot be physically subdivided, but each can be shared among a limited number of users via scheduled access time slots or usage quotas. However, sharing may come at a cost, and the effective value of a resource to each user can diminish as more users share it. A purely non-sharing allocation may leave some researchers without access to any resource, failing to meet basic fairness expectations. In contrast, allowing resources to be shared in a structured, bounded manner can help achieve fairness even when sharing incurs a cost, motivating the study of fair allocation under bounded, cost-sensitive sharing.} {Allowing such structured sharing can potentially lead to allocations with improved fairness guarantees.}{This leads to the central question:}

\begin{center}
\emph{Can a fair allocation be obtained by relaxing the classical no-sharing constraint in structured ways, particularly in settings where fairness is otherwise unattainable?}
\end{center}

In this work, we allow each good to be allocated to at most~$k$ agents, while explicitly accounting for the costs incurred due to sharing. This framework significantly broadens the applicability of fair allocation, making it more aligned with real-world scenarios where limited sharing is often inevitable or desirable. A consequence of the result by~\citet{Akrami2025} and~\citet{Barman2025} implies that sharing among two agents is sufficient to ensure MMS fairness, assuming no cost is incurred when sharing goods. Including the cost of sharing makes the model more realistic, as sharing typically reduces the benefit that a good provides to each user. 
Therefore, we formalize the $k$-sharing fair allocation problem and introduce models to capture the cost of sharing {(in Section~\ref{sec:problem})}.  Our main theoretical contributions are as follows:

\begin{enumerate}    
    \item {In Section~\ref{sec:mms}, }we show that exact maximin share (MMS) allocations can be achieved when goods are allowed to be cost-sensitively shared among at least half of the agents and when there is an even number of agents, under certain cost models. When agents are odd in number, we establish a slightly weaker MMS guarantee {(Theorem~\ref{prop:mms-for-2k-generous})}. Further, for any cost-sharing model, we provide {Shared Bag-Filling Algorithm} for computing approximate MMS allocations in the shared setting {(Theorem~\ref{prop:Bag-Filling})}.
    
    \item {In Section~\ref{sec:smms}}, we introduce a stronger fairness notion for $k$-sharing allocations, \emph{Sharing Maximin Share (SMMS)}. We show that SMMS always exists when there are only two agents  ({Proposition~\ref{prop:smms-for-two-agents}}) and under identical utilities ({Proposition~\ref{prop:smms-for-identical-agents}}). Interestingly, the existing MMS counterexamples \cite{Kurokawa2018,Feige2021} admit feasible SMMS allocations, prompting us to construct a new counterexample to disprove universal SMMS existence
    {(Theorem~\ref{prop:smms-dont-exist})}.

    \item  Additionally, we establish a connection between SMMS for shared settings and constrained MMS (CMMS), yielding approximation guarantees for SMMS {(Proposition~\ref{prop:approx-MMS-SMMS})}.
\end{enumerate}

\section{Related Work}
The literature on discrete fair allocation is extensive and rapidly evolving, typically operating under the two key assumptions: \textit{non-shareability}, where an agent's allocation is disjoint from others, and \emph{completeness}, where all items must be allocated among all the agents. Recent survey articles, such as those by \citet{walshfair}, \citet{Aziz2022}, and \citet{Amanatidis2023}, offer comprehensive insights into the algorithmic and complexity dimensions of the area of fair allocation. In addition to these broad overviews, more focused surveys have explored specific subdomains: for instance, \citet{aleksandrov2020online} examined fairness in dynamic resource allocation, while \citet{Suksompong2021} and \citet{Biswas2023} investigate fairness under various structural set constraints.

Despite significant advances, the well-studied maximin share (MMS)~\cite{Budish2011} fairness guarantee does not hold under the standard assumptions of non-shareability and completeness~\cite{kurokawa2016can,Kurokawa2018,Feige2021}. Recent research has made progress on achieving various fairness guarantees by relaxing these classical assumptions through innovative approaches: (1)~allowing some items to remain unallocated (as \emph{charity})~\cite{Chaudhury2021,biswas2023algorithmic}, (2)~treating items expandable such as school seats or course seats, and allowing their capacities to be increased~\cite{procaccia2024school,santhini2024approximation}, {(3) considering some items to be divisible~\cite{Bismuth2024,Sadomirskiy2022}} and ({4})~creating duplicate copies of items to enable multi-allocation~\cite{Barman2025,Akrami2025}. \citet{Barman2025} show that MMS fairness can be achieved under additive valuations by duplicating each item at most once, and Akrami et al.~\cite{Akrami2025} show that MMS can be achieved with one duplicate copy of at most $\lfloor n/2\rfloor$ items. While these results imply that allowing sharing among two agents helps achieving MMS fairness, the results fall short when the cost of sharing is considered. To the best of our knowledge, no prior work has considered cost-sensitive shared fair allocation setting. 
\section{Preliminaries}

An instance of a discrete fair allocation problem is typically represented by a triple $I = (N, M, v)$ 
where $N$ is a set of $n$ agents, $M$ is a set of $m$ indivisible items, and $v = (v_1, \dots, v_n)$ is a valuation profile, where each $v_i: 2^M \rightarrow \mathbb{R}$ is a valuation function for agent $i$ assigning a value to every subset of items $S\subseteq M$. In this work, we assume non-negative and additive valuations, that is, $v_i(S) \in \mathbb{R}_{0}^+$ for all $S\subseteq M$ and for each agent $i\in N$, and $v_i(S) = \sum_{g \in S}v_i(\{g\})$. Since we assume non-negative valuations, we will refer to items as \emph{goods}. For brevity, we will write $v_i(g)$ instead of $v_i(\{g\})$ throughout the paper.

In the classical setting, an allocation is a tuple $A = (A_1, A_2, \dots, A_n)$ where each $A_i \subseteq M$ is the bundle allocated to an agent $i$ and satisfy two properties, namely (1)~\emph{non-shareability}: $A_i \cap A_j = \emptyset$ for all $i \ne j$
and (2)~\emph{completeness}: $\bigcup_{i \in N} A_i = M$. In other words, an allocation corresponds to a partition of the set of goods $M$ into $n$ disjoint subsets, with no good assigned to more than one agent.

In this work, we focus on \emph{maximin share} (MMS) fairness~\cite{Budish2011}, which is one of the most well-studied notions in discrete fair allocation literature. It guarantees each agent at least their \emph{maximin share value}~($\MMS[i]$), that is, $v_i(A_i) \geq \MMS[i]$ where
\begin{equation}\label{def:MMS}
    \MMS[i] := \max_{(A_1, \dots, A_n) \in \mathcal{A}^n(M)} \min_{j \in [n]} v_i(A_j),
\end{equation}
and $\mathcal{A}^n(M)$ denotes the set of all possible $n$-partitions of the set of goods~$M$. In other words, the \emph{maximin share value}~$\MMSi$ for an agent~$i$ is the maximum value she can guarantee for herself by partitioning the goods into $n$ bundles and receiving the least valued bundle.


While MMS often fails to exist in discrete fair allocation problems under the standard assumptions of non-shareability, we reveal new pathways to fairness by introducing cost-sensitive restricted sharing. We now formalize the problem setting in the subsequent section.

\section{Problem Formulation}\label{sec:problem}

We investigate fair allocation settings where each good is allowed to be shared among a fixed number of agents $k$. 
We introduce \emph{$k$-sharing allocations}.
\begin{definition}
\label{def:k-sharing-allocation}
A \emph{$k$-sharing allocation} is a tuple $A = (A_1,\dots,A_n)$ where the bundles $A_i \subseteq M$ satisfy two properties, namely (1)~\emph{$k$-limited shareability}: $\bigcap_{i \in S} A_i = \emptyset$, for every group $S \subseteq N$ of size $|S|>k$ and (2)~\emph{completeness}: $\bigcup_{i \in N} A_i = M$. Further, $\mathcal{A}_k^n(M)$ denotes the set of all possible $k$-sharing allocations of goods $M$.
\end{definition}
\noindent
The \emph{$k$-limited shareability} condition ensures that each good can be shared among at most $k$ agents. To denote the set of agents sharing a good \( g \) under allocation \( A \), we use 
\[
    N_g(A) := \{ i \in N : g \in A_i \}.
\]
 According to the \emph{$k$-limited shareability} condition, we are required to satisfy \( |N_g(A)| \leq k \).  
Additionally, we say that a \( k \)-sharing allocation is \emph{fully-shared} if every good is shared by exactly \( k \) agents, that is, \( |N_g(A)| = k \) for all \( g \in M \).

In the classical ($1$-sharing) model, goods are allocated exclusively, and each agent’s utility is determined solely by the goods in her bundle. However, in the \( k \)-sharing setting, the utility an agent derives from a good may depend not only on having access to it but also on how many share it. Thus, rather than defining utility based only on the agent's individual bundle, we consider utility functions \( u_i \) that take the full allocation \( A \) as input.

\begin{definition}
The \emph{utility} of an agent $i$ in a k-sharing setting,  $u_i \colon \mathcal{A}_k^n(M) \to \mathbb{R}_0^+$, is defined as
\begin{equation}\label{eq:utility}
    u_i(A) = \sum_{g \in A_i} \big[1-c_{i,g}(N_g(A))\big] \cdot v_i(g)
\end{equation}
with $c_{i,g}(N_g(A)) \in [0,1]$ denoting the cost incurred by agent $i$ for sharing good $g$ with agents $N_g(A)$. Additionally, 
\begin{equation}\label{eq:zero-costs}
    c_{i,g}(N_g(A)) = 0\text{ if }|N_g(A)| = 1.
\end{equation}
\end{definition}





In this work, we consider \emph{goods-based} cost models, where for every agent $i \in N$ and good $g \in M$, the cost depends only on the number of agents sharing $g$, and is given by $c_{i,g}(N_g(A)) = c_g(|N_g(A)|)\in[0,1]$.

\begin{enumerate}
\item \textbf{Cost-free sharing:} $c_{g}(|N_g(A)|) = 0$ for every $g \in M$ (no cost of sharing). Here, the utility of agent \( i \) equals the valuation of their allocated bundle (including their shared goods), \( u_i(A) = v_i(A_i) \).

\item \textbf{Equal-share cost-sharing:} $c_{g}(|N_g(A)|) = 1 - \frac{1}{|N_g(A)|}$ for all $g \in M$. Therefore, each agent receives a utility of $\frac{1}{
|N_g(A)|}$ fraction of her valuation for the good $g$ allocated to her,
\(
    u_i(A) = \sum_{g \in A_i} \frac{1}{|N_g(A)|} \cdot v_i(g).
\)

\item \textbf{Generous cost-sharing:} \( c_{g}(|N_g(A)|) \in [0,1-\frac{1}{|N_g(A)|}] \) for every $g \in M$. 
Further, we assume costs are non-decreasing in the number of sharers: \( c_g(\ell') \leq c_g(\ell) \) for \( \ell' < \ell \). 
When $c_{g}(|N_g(A)|)=0$ for all $g\in M$, the model becomes \emph{cost-free}.  When $c_{g}(|N_g(A)|)=1-\frac{1}{|N_g(A)|}$ for all $g\in M$, it represents \emph{equal-share}.  


\end{enumerate}

Using all the components, we now denote a cost-sensitive $k$-sharing fair allocation instance as a tuple $(N,M,k,v,c)$ where $N$ is the set of $n$ agents, $M$ is the set of $m$ goods, $k$ is the sharing constraint, $v=(v_i,\ldots,v_n)$ is the valuation profile, and the sharing-cost function is denoted by $c=(c_{g}(\ell)))_{g \in M, \ell\in \{1,\ldots,k\}}$. 

\section{MMS under Cost-Sensitive Sharing}\label{sec:mms}


A $k$-sharing allocation \( A \in \mathcal{A}_k^n(M) \) satisfies MMS if \( u_i(A) \geq \MMS \) for all \( i \in N \) (as defined in Equation~\ref{def:MMS}). While sharing increases the flexibility to achieve fairer outcomes—by allowing agents to receive more (potentially shared) goods—it also introduces costs that complicate the structure of the utility function. For any given good, the utility an agent derives can be, due to the costs, lower when the good is shared than when it is received exclusively.



We now highlight the generalizability of the equal-share cost-sharing model in terms of satisfying MMS, or any other threshold-based fairness notions.

\begin{observation}\label{obs:equal_share}
    In a k-sharing setting, if an allocation satisfies MMS under equal-share cost-sharing model, then the allocation also satisfies MMS under any \emph{generous} cost-sharing model, including cost-free sharing model. 
\end{observation}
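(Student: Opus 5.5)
The plan is a termwise comparison of utilities. Fix a $k$-sharing allocation $A \in \mathcal{A}_k^n(M)$ that satisfies MMS under the equal-share model, so $\sum_{g \in A_i} \frac{1}{|N_g(A)|} v_i(g) \geq \MMS[i]$ for every agent $i$. Now take any generous cost model $c$. We keep the same allocation $A$ and show that each agent's utility under $c$ is at least her equal-share utility. The MMS threshold $\MMS[i]$ from Equation~\ref{def:MMS} depends only on $v_i$, $n$ and $M$, not on the cost model, so the same inequality then holds under $c$.

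The key step is a good-by-good comparison. For each $g \in A_i$, the generous model requires $c_g(|N_g(A)|) \le 1 - \frac{1}{|N_g(A)|}$. Hence
\[
1 - c_g(|N_g(A)|) \ge \frac{1}{|N_g(A)|}.
\]
Multiplying by $v_i(g) \ge 0$, which holds by non-negativity of valuations, and summing over $g \in A_i$ via Equation~\ref{eq:utility} gives
\[
u_i^{c}(A) \ge \sum_{g \in A_i} \frac{1}{|N_g(A)|} v_i(g) \ge \MMS[i].
\]
The set $N_g(A)$ is non-empty for each $g \in A_i$, since it contains $i$, so the fraction is well defined. For unshared goods both sides equal $v_i(g)$, consistent with Equation~\ref{eq:zero-costs}.

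The cost-free model is the special case $c_g \equiv 0$ of the generous model, so it is covered automatically. The monotonicity assumption on generous costs is not needed.

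There is no real obstacle here. The only point to state explicitly is that the MMS benchmark is defined through ordinary $n$-partitions and is independent of the costs, which is what allows the threshold to stay fixed while the utilities only increase. The same argument applies verbatim to any threshold-based fairness notion whose threshold does not depend on $c$.
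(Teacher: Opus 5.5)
Your proposal is correct and matches the paper's argument. The paper also notes that, for a fixed allocation, replacing each cost $1-\frac{1}{|N_g(A)|}$ by any generous (smaller or equal) value can only raise every agent's utility, while the MMS threshold is independent of the cost model. Your good-by-good inequality, together with the remarks that the threshold is cost-independent and that monotonicity of the costs is unnecessary, just spells out that one-line justification.
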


Observation~\ref{obs:equal_share} follows directly from the definitions of the cost-sharing models, implying that an agent's utility for a given $k$-sharing allocation under the \emph{equal-share} model $c_{g}(|N_g(A)|)=1-\frac{1}{|N_g(A)|}$ increases when we consider lesser cost of sharing,  $0\leq c_{g}(|N_g(A)|)<1-\frac{1}{|N_g(A)|}$. 


Therefore, we focus on the equal-share cost-sharing model to investigate the existence of MMS in the cost-sensitive \(k\)-sharing setting. Intuitively, allowing goods to be shared among more agents increases the likelihood of achieving a fair allocation. Indeed, when \(k = n\), each item can be shared among all agents, trivially satisfying MMS. 

\subsection{Sharing among at Least Half of the Agents}
Assuming $k\geq n/2$ under equal-share cost-sharing model, we establish the existence of MMS when $n$ is even, with a weaker MMS guarantee when $n$ is odd.

\begin{theorem}\label{prop:mms-for-2k-generous}
    In every $k$-sharing instance $(N, M, k, v, c)$ under equal-share cost-sharing model, with $k \geq n/2$ agents, there exists an allocation $A$ that for every agent $i$ satisfies:
    \begin{itemize}
        \item $u_i(A) \geq \MMS[i][n]$ if $n$ is even,
        \item $u_i(A) \geq \MMS[i][n+1]$ if $n$ is odd.
    \end{itemize}
\end{theorem}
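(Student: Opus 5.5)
The plan is to reduce the statement to a two-group allocation. Split the agents into two groups $G_1,G_2$ with $|G_1|,|G_2|\le k$, and split the goods into $X$ and $M\setminus X$. Every good of $X$ is shared by all of $G_1$ and every good of $M\setminus X$ by all of $G_2$, so $k$-limited shareability and completeness hold. Under the equal-share model, an agent $i\in G_1$ gets $u_i(A)=v_i(X)/|G_1|$, and symmetrically for $G_2$.

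For even $n$ we take $|G_1|=|G_2|=n/2\le k$. It then suffices that $i\in G_1$ has $v_i(X)\ge \frac n2\MMS[i][n]$, and likewise $v_i(M\setminus X)$ for $i\in G_2$. For odd $n$ we take $|G_1|=\frac{n+1}2\le k$ (since $k\ge n/2$ forces $k\ge\frac{n+1}2$) and $|G_2|=\frac{n-1}2$. The target for both groups is then $v_i(\cdot)\ge\frac{n+1}2\MMS[i][n+1]$; for agents in the smaller group this is more than needed. In both cases write $t$ for the relevant multiple ($t=n/2$ or $t=(n+1)/2$) and $\mu_i$ for the relevant MMS value. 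Call a side \emph{acceptable} to $i$ if $i$ values it at least $t\mu_i$. Because $v_i(M)\ge 2t\,\mu_i$ (merge bundles of $i$'s MMS partition), at least one of $X, M\setminus X$ is always acceptable to $i$.

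It remains to choose $X$ so that the agents can be assigned to the two sides with the prescribed group sizes, each agent on a side it finds acceptable. By Hall's theorem on this two-sided assignment, this holds exactly when at least $|G_1|$ agents find $X$ acceptable and at least $|G_2|$ find $M\setminus X$ acceptable. To find such an $X$, order the goods as $g_1,\dots,g_m$ and let $X_j=\{g_{j+1},\dots,g_m\}$. As $j$ grows, the number $a(j)$ of agents accepting $X_j$ is nonincreasing and the number $b(j)$ accepting $M\setminus X_j$ is nondecreasing. Also $a(0)=b(m)=n$ and $a(j)+b(j)\ge n$. A discrete intermediate-value argument should then produce an index with $a(j)\ge|G_1|$ and $b(j)\ge|G_2|$.

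The main obstacle is that a single good may flip many agents at once, so the counts can jump past the threshold. My first attempt is to exploit the structure of MMS partitions. Use one agent's MMS partition, grouped into two unions of $t$ bundles, as the initial split. Then move whole bundles (rather than single goods) to control the jumps. If a jump still occurs, handle the critical good separately: the agent(s) causing the jump accept both adjacent partitions, so they can be placed on whichever side balances the counts. If this fails in general, I would fall back to an explicit divide-and-choose induction on the number of agents, as in the $n=2$, $k=1$ cut-and-choose base case.
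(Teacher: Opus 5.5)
\textbf{There is a genuine gap: the existence of the common cut $X$ is never proved.}

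Your reduction to two groups is valid as far as it goes. The utilities are $v_i(X)/|G_1|$ and $v_i(M\setminus X)/|G_2|$, every agent accepts at least one side, and the assignment exists iff $a\ge|G_1|$ and $b\ge|G_2|$. But this moves the whole content of the theorem into one claim: there is a single cut $X$ acceptable to $|G_1|$ agents on one side and to $|G_2|$ agents on the other. That claim is never established.

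The prefix argument fails exactly where you suspect, and none of your repairs closes it.
\begin{itemize}
\item Moving whole bundles only coarsens the path, so jumps get larger, not smaller.
\item The claim that agents responsible for a jump accept both adjacent partitions is false for indivisible goods. Take $n=4$ and an agent with values $2,2,2,1,1$, so $\operatorname{MMS}^4_i=2$ and the acceptance threshold is $4$. Moving one good of value $2$ turns a split valued $5\,|\,3$ into $3\,|\,5$. The agent accepts only one side before the move and only the other side after it.
\item Several agents can flip at the same step. So along a fixed ordering, $(a(j),b(j))$ can pass directly from ($a\ge|G_1|$, $b<|G_2|$) to ($a<|G_1|$, $b\ge|G_2|$).
\item The fallback divide-and-choose induction is not specified.
\end{itemize}
What you are asking for is a consensus-halving-type cut for indivisible goods that works for $n/2$ agents on each side at once. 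That is much stronger than the theorem needs, and nothing in your sketch shows it is even true.

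\textbf{The missing idea is that no common cut is required.} The paper proceeds as follows.
\begin{itemize}
\item Pair the agents, and let each pair $\{i,j\}$ run cut-and-choose on its own. This gives a bipartition $(A_i,A_j)$ of $M$ with $v_i(A_i)\ge\operatorname{MMS}^2_i$ and $v_j(A_j)\ge\operatorname{MMS}^2_j$. Different pairs may use different cuts.
\item Every good lies in exactly one bundle of each pair, hence in exactly $n/2\le k$ bundles.
\item Under equal sharing, $u_i(A)=v_i(A_i)/(n/2)\ge\operatorname{MMS}^2_i/(n/2)\ge\operatorname{MMS}^n_i$. The last inequality comes from merging an optimal $n$-partition into two groups of $n/2$ bundles each.
\item For odd $n$, add a zero-valued dummy agent and apply the even case to $n+1$ agents, which is allowed since the integer $k$ satisfies $k\ge(n+1)/2$. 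Then remove the dummy, which cannot lower any real agent's utility.
\end{itemize}
Your two-group scheme is exactly the special case in which all pairs are forced to use the same cut. That extra restriction is what creates the obstacle you could not get past.
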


\begin{proof} We provide a constructive proof for both the cases. First, we consider $n$ to be even. 
We arrange all the agents into \(\ell = n/2\) disjoint pairs. For each pair of agents, \(\{i, j\}\), we define a classical (no-sharing) fair allocation instance, $I_{i,j}=(N =\{i,j\}, M, v=(v_i, v_j))$. Since MMS allocation can always be obtained when there are 2 players, we find one such allocation and denote it as $A_{i,j}=(A_i, A_j)$. Note that the allocated bundles between each pair are disjoint, $A_i\cap A_j = \emptyset$. Now, let us consider all such allocated bundles from all $\ell$ pairs of agents $A=(A_1,\ldots,A_n)$. Since the allocated bundles between each of the $\ell$ pairs were disjoint, each good $g$ appears in exactly $\ell$ different bundles in $A$, and the utility is reduced by {a factor of} $1/\ell$. Therefore, for any agent $i$, the utility of $A_i$ upon sharing each good with $\ell$ agents, under equal-share cost-sharing model is:
\begin{align}
u_i(A) 
&= \sum_{g \in A_i} (1 - c_g(\ell)) \cdot v_i(g)\notag\\
& = \sum_{g \in A_i} \frac{1}{\ell} \cdot v_i(g) = \frac{1}{\ell} \cdot v_i(A_i)\notag \\
& \geq \frac{1}{\ell}\max_{B \in \mathcal{A}_1^2(M)}\min_{h \in \{i,j\}}v_i(B_h) = \frac{1}{\ell} \cdot \mbox{MMS}^2_i(M)\label{eq:u1}
\end{align}





    Now, let \( B^* \in \mathcal{A}_1^{2\ell}(M) \) be the $\MMS[i]$ maximizer. Let \(C\) be the union of the \(\ell\) least-valued bundles in \(B^*\) according to agent~\(i\), and let \(C' = M \setminus C\). Both \(C\) and \(C'\) have total value at least \( \ell \cdot \min_{h \in N} v_i(B^*_h) \). Therefore, the bipartition \( (C, C') \in \mathcal{A}_1^2(M) \) satisfies:
    \begin{equation}
    \min\{v_i(C), v_i(C')\}\geq \ell \cdot \MMS[i]\label{eq:mms2}.
    \end{equation}
    
   Inequality (\ref{eq:mms2}) along with the fact that $\mbox{MMS}^2_i(M)\geq \min \{v_i(D), v_i(D')\}$ for any $(D,D')\in \mathcal{A}^2_1(M)$, and combining with Inequality~(\ref{eq:u1}), we obtain $u_i(A) \geq \MMS[i]$ when $n$ is even.



    Next, we consider $n$ to be odd. In this case, we add a \emph{dummy} agent with zero valuation for all goods. With $n+1$ (even) agents, we apply the construction for even agents, obtaining $k$-sharing allocation of $n+1$ agents, $A$, which satisfies: $u_i(A) \geq \MMS[i][n+1]$ for all agents $i \in N$. Finally, the \emph{dummy} agent is removed and its goods are arbitrarily distributed among all the agents. Since this can only increase the utility of each agent in $N$, the result follows.
\end{proof}

Next, we show how to compute an $\alpha$-approximate MMS allocation in polynomial time for any $k \geq 2$, possibly yielding a better approximation than algorithms without sharing.


\subsection{Approximate MMS via Cost-Sensitive k-Sharing}
Our polynomial-time algorithm, {Shared Bag-Filling}, (Algorithm~\ref{alg:modified-bag-filling}) works {under any \emph{general} cost model with $c_{i,g}(N_g(A)) \in [0,1]$, including the generous cost-sharing models}.{for \textit{any} cost-sharing model, not only for generous goods-based ones. This formally means that the costs satisfy $c_{i,g}(N_g(A)) \in [0,1]$ without any additional restrictions.}
{The algorithm} computes an $\alpha$-approximate MMS allocation, with the guarantee depending on the maximum sharing cost:
\[
C := \max_{i \in N, g \in M, S \subseteq N, |S| \leq k} c_{i,g}(S).
\]
Notably, Algorithm~\ref{alg:modified-bag-filling} requires neither the knowledge nor the computation of $C$.



\begin{algorithm}[t]
\caption{
{Shared Bag-Filling Algorithm}
}
\label{alg:modified-bag-filling}
\begin{algorithmic}[1]
\REQUIRE $I=(N, M, k, v, c)$
\ENSURE The output allocation $A$ is $\alpha$-MMS where $\alpha=\min\{1,(1-C)(k-1)\}$.

\STATE $\tilde{N} \gets N$, $\tilde{M} \gets M$  \hfill // remaining agents and goods

\nonumber{$\ $}\\
\nonumber{\textbf{Phase 1}}

\WHILE{$\exists i \in N, \exists g \in M$ s.t. $v_i(g) \geq \frac{v_i(\tilde{M})}{|\tilde{N}|}$}
    \STATE $A_i \gets \{g\}$; $\quad\tilde{M} \gets \tilde{M} \setminus \{g\}$; $\quad\tilde{N} \gets \tilde{N} \setminus \{i\}$
\ENDWHILE

\nonumber{$\ $}\\
\nonumber{\textbf{Phase 2}}
\STATE $v_i^\text{norm}(g) \gets \frac{|\tilde{N}|}{v_i(\tilde{M})} \cdot v_i(g)$ for all $i \in \tilde{N}$ and all $g \in \tilde{M}$
\STATE Let $\mathcal{M}$ be the multiset of $k$ copies of each good from $\tilde{M}$
\hfill // denote the multiplicity of $g$ in $\mathcal{M}$ by $\xi_{\mathcal{M}}(g)$
\STATE $\tilde{v}_i(g)=\frac{1}{k} \cdot v_i^\text{norm}(g)$ for all $i \in \tilde{N}, g \in \tilde{M}$

\WHILE{$|\tilde{N}| > 1$}
    \STATE Initialize empty bag $B \gets \emptyset$
    \WHILE{exists $g \in \mathcal{M}$ s.t. $\xi_{\mathcal{M}}(g) = |\tilde{N}|$}
        \STATE $B \gets B \cup \{g\}$
        \STATE $\mathcal{M} \gets \mathcal{M} \setminus \{g\}$ \hfill // ensures $\xi_{\mathcal{M}}(g) < |\tilde{N}|$
    \ENDWHILE
    \WHILE{$\tilde{v}_i(B) < \frac{k-1}{k}$ for all $i \in N$}
        \STATE Select arbitrary $g \in \mathcal{M}$ such that $g \notin B$
        \STATE $B \gets B \cup \{g\}$ and $\mathcal{M} \gets \mathcal{M} \setminus \{g\}$
    \ENDWHILE
    \STATE Choose any agent $i^* \in \tilde{N}$ such that $\tilde{v}_{i^*}(B) \geq \frac{k-1}{k}$
    \STATE $A_{i^*} \gets B$
    \STATE $\tilde{N} \gets \tilde{N} \setminus \{i^*\}$
    \hfill // invariant $\xi_{\mathcal{M}}(g) \leq |\tilde{N}|$ holds
\ENDWHILE

\STATE $A_j \gets \mathcal{M}$, where $j$ is the remaining agent in $\tilde{N}$

\RETURN $A = (A_1, \dots, A_n)$
\end{algorithmic}
\end{algorithm}


Algorithm~\ref{alg:modified-bag-filling} builds on the classic bag-filling technique~\cite{Ghodsi2021,garg2019approximating,Garg2021}, adapted to the $k$-sharing setting. It runs in two phases. 

Phase~1 allocates \emph{large goods}, i.e., those with $v_i(g) \geq {v_i(M)/|N|}$ ${v_i(\tilde{M})/|\tilde{N}|}$ for some $i \in {\tilde{N}}$. Since $g$ already meets agent~$i$’s MMS guarantee {(in the instance with $\tilde{M}$, $\tilde{N}$)}, we can assign it directly to her and exclude her from further allocation, as by the monotonicity of MMS (see Appendix~\ref{app:bag-filling}), this 
does not reduce the MMS values of the remaining agents on the remaining goods. 

Phase~2 begins by normalizing the valuations of the remaining agents $N'$ by setting $v_i^\text{norm}(M') = |N'|$, where $M'$ are the remaining goods. This guarantees $\operatorname{MMS}_i^\text{norm} \leq 1$. In the subsequent steps, $u_i^\text{norm}(A) \geq \MMSi^\text{norm}$ is ensured for each remaining agent, which implies the original MMS guarantee. Details are deferred to Appendix~\ref{app:bag-filling}. 

Each good $g \in M’$ is then replaced by $k$ virtual copies (shares), and valuations are scaled so that each share has value $\tilde{v}_i(g) = \frac{1}{k} \cdot v_i^\text{norm}(g)$. This transformation preserves total value and ensures that the final outcome corresponds to a fully-shared allocation over $M’$, while reducing the problem to allocating small items.

Phase~2 then proceeds similarly to classic bag-filling, with the constraint that no bag contains multiple shares of the same good. It starts with an empty bag $B$, adds one share of each good whose number of remaining shares equals the number of remaining agents, then continues adding shares arbitrarily until some agent $i$ satisfies $\tilde{v}_i(B) \geq \frac{k-1}{k}$. The bag is allocated to agent~$i$, and the process repeats. The last agent receives all remaining goods. For first $|N'|-1$ agents,

\begin{equation}\label{eq:what-player-receives}
\begin{split}
    u_i^\text{norm}(A) &\geq (1 - C) \cdot v_i^\text{norm}(A_i) = (1 - C) \cdot k \cdot \tilde{v}_i(A_i) \\
         &\geq (1 - C) \cdot k \cdot \frac{k-1}{k} = (1-C)\cdot (k-1).
\end{split}
\end{equation}
Since the valuations are normalized, we have $\MMSi^\text{norm}\leq 1$. Therefore, it holds $u_i^\text{norm}(A) \geq (1 - C)(k - 1)\cdot\MMSi^\text{norm}$. 

There are two issues to address in the analysis of Phase~2. First, it is not immediately clear that the algorithm cannot get stuck---i.e., that at any point, there is always an item that can be added to the current bag when it is not acceptable to every agent{s}. Second, it's not obvious that the last agent, who receives the remaining goods, values them at least at their $(1 - C)(k - 1)$-MMS value.

{The key to understanding why these concerns do not arise lies in the following lemma, which shows that at every iteration of the algorithm there remain shares of sufficient amount and value to satisfy the remaining agents.}


\begin{lemma}
\label{lem:invariant}
At {the beginning and the end of} every iteration of Phase 2 {(Lines 8-21)} of Algorithm~\ref{alg:modified-bag-filling}, for every $i \in \tilde{N}$, it holds
\[
    \tilde{v}_i(\mathcal{M}) \geq |\tilde{N}|,
\]
where $\mathcal{M}$ {is} the multiset of $k$ copies of each good from $\tilde{M}$.
\end{lemma}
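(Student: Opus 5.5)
Induction on the iterations of Phase~2, tracking the quantity $\tilde{v}_i(\mathcal{M}) - |\tilde{N}|$ for each remaining agent $i$.

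\textbf{Base case.} At the start of Phase~2 we have $\tilde{v}_i(\mathcal{M}) = k \cdot \frac{1}{k} v_i^\text{norm}(\tilde{M}) = |\tilde{N}|$ for every $i \in \tilde{N}$, by the normalization on Line~5. So the invariant holds with equality when the first iteration begins.

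\textbf{Inductive step.} Consider one iteration that produces the bag $B$ and hands it to $i^*$. Then $|\tilde{N}|$ drops by exactly one, and $\mathcal{M}$ loses exactly the shares in $B$. It therefore suffices to show that $\tilde{v}_i(B) \le 1$ for every agent $i$ still in $\tilde{N}$ afterwards, since then $\tilde{v}_i(\mathcal{M})$ decreases by at most $1$. The end-of-iteration statement then follows, and the beginning of the next iteration is the same state.

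\textbf{Bounding $\tilde{v}_i(B)$.} The bag has two parts.
\begin{itemize}
\item The forced part $B_0$ consists of shares of goods with $\xi_{\mathcal{M}}(g) = |\tilde{N}|$.
\item The greedy part is added one share at a time while every agent still has $\tilde{v}_i(B) < \frac{k-1}{k}$.
\end{itemize}
Just before the last greedy share $g$ is added, we have $\tilde{v}_i(B) < \frac{k-1}{k}$. After it is added, $\tilde{v}_i(B) < \frac{k-1}{k} + \tilde{v}_i(g)$. Because Phase~1 has ended, no good is large: $v_i(g) < v_i(\tilde{M})/|\tilde{N}|$, so $v_i^\text{norm}(g) < 1$ and $\tilde{v}_i(g) < \frac{1}{k}$. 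This gives $\tilde{v}_i(B) < 1$.

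Two degenerate cases remain.
\begin{itemize}
\item If no greedy share is added, then $B = B_0$. We need $\tilde{v}_i(B_0) \le 1$, or else a direct argument that the invariant survives. The main obstacle is here, because $B_0$ may contain many shares.
\item Here the plan is to use the multiplicity invariant $\xi_{\mathcal{M}}(g) \le |\tilde{N}|$. Every forced good has all $|\tilde{N}|$ of its remaining shares in $\mathcal{M}$. Removing one share of each forced good and one agent keeps, for those goods, exactly the ratio ``value per remaining agent''.
\item More precisely, split $\mathcal{M} = \mathcal{F} \cup \mathcal{R}$, where $\mathcal{F}$ is the set of shares of forced goods. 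Then $\tilde{v}_i(\mathcal{F}) = |\tilde{N}| \cdot \tilde{v}_i(B_0)$. If $\tilde{v}_i(B_0) > 1$, removing $B_0$ leaves $\tilde{v}_i(\mathcal{F}\setminus B_0) = (|\tilde{N}|-1)\tilde{v}_i(B_0) \ge |\tilde{N}|-1$, so the invariant still holds.
\item If $\tilde{v}_i(B_0) \le 1$, the greedy bound above applies unchanged. In this case the final greedy share is added only when the bag is below $\frac{k-1}{k}$, so the combined bag stays below $1$.
\end{itemize}
I would also check that the greedy loop never adds a share of a good already in $B$ without this mattering for the value bound.

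\textbf{Progress.} The same inequality shows the loop cannot get stuck. While every agent has $\tilde{v}_i(B) < \frac{k-1}{k} < 1 \le |\tilde{N}| \le \tilde{v}_i(\mathcal{M})$, some share not yet in $B$ remains available. This uses the invariant together with $\xi_{\mathcal{M}}(g) < |\tilde{N}|$ after forcing.
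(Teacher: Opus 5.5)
Your proof is correct and follows the paper's argument: the base case comes from normalization, the last-share bound $\tilde v_i(B) < \frac{k-1}{k} + \frac{1}{k}$ handles the case where the greedy loop runs, and the averaging fact that forced goods have exactly $|\tilde N|$ remaining shares handles the case where it does not. Your sub-split on whether $\tilde v_i(B_0) > 1$ is unnecessary, since $\tilde v_i(B_0) = \tilde v_i(\mathcal F)/|\tilde N| \le \tilde v_i(\mathcal M)/|\tilde N|$ already gives $\tilde v_i(\mathcal M\setminus B_0) \ge \frac{|\tilde N|-1}{|\tilde N|}\tilde v_i(\mathcal M) \ge |\tilde N|-1$ in one step, which is how the paper does it.
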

\begin{proof}
Initially, $\tilde{v}_i(\mathcal{M}) = k  \cdot \tilde{v}_i(M') = k \cdot \frac{1}{k} \cdot v_i^\text{norm}(M') = |N'|$. Assume that the invariant holds at the beginning of some iteration, when the remaining shares are $\mathcal{M}$.

{If the while loop at Line 14 does not execute, that is, already the initial bag $B$ (containing exactly one share of each good whose number of remaining shares equals the number of remaining agents) satisfies $\tilde{v}_{i^*}(B) \ge \frac{k-1}{k}$ for some agent~$i^*$, then the bag is immediately assigned to agent~$i^*$. In this case,}
the total value of $B$ is at most $\frac{1}{|\tilde{N}|}\tilde{v}_i(\mathcal{M})$ for every agent. This means
\begin{align*}
\tilde{v}_i(\mathcal{M} \setminus B) 
&\geq \tilde{v}_i(\mathcal{M}) - \frac{1}{|\tilde{N}|} \tilde{v}_i(\mathcal{M}) 
= \frac{|\tilde{N}| - 1}{|\tilde{N}|} \tilde{v}_i(\mathcal{M})
\end{align*}
which is by the assumption larger or equal to $|\tilde{N}|-1$.

{Otherwise, if the loop starting at Line 14 executes at least once,}
additional goods are added to the bag $B$ according to Lines~15-18. During this process, before the last good was added, the value of $B$ was at most $(k-1)/k$ for every agent. As all shares are valued by all agents at most $1/k$, the value of the bag is at most $1$. In both cases, the invariant is preserved.
\end{proof}

{Combining Lemma~\ref{lem:invariant} with the Phase~2 invariant—that the number of shares of every remaining item is at most the number of remaining agents—we conclude that a dead-end state cannot occur because if there is no share left to add to the bag, then every remaining agent must already prefer the bag.}

{\begin{proposition}\label{prop:termination}
    Phase 2 does not enter a dead-end state.
\end{proposition}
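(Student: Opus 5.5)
The plan is to show that whenever the inner loop at Line~14 is entered, Line~16 can select a share $g \in \mathcal{M}$ with $g \notin B$. The argument rests on two ingredients: Lemma~\ref{lem:invariant}, which gives $\tilde{v}_i(\mathcal{M}) \geq |\tilde{N}|$ at the start of each iteration, and the multiplicity invariant $\xi_{\mathcal{M}}(g) \leq |\tilde{N}|$ noted in the algorithm.

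First I will establish the multiplicity invariant by induction over the iterations of the outer loop. Initially $\xi_{\mathcal{M}}(g) = k$ for every $g \in \tilde{M}$. After Phase~1 this should give $k \leq |\tilde{N}|$; if $k > |\tilde{N}|$, then capping sharing at $|\tilde{N}|$ changes nothing essential, and I would treat that as a preliminary case. Inside an iteration, the loop at Lines~10--13 removes one share of each good with $\xi_{\mathcal{M}}(g) = |\tilde{N}|$. Once $i^*$ is removed, $|\tilde{N}|$ drops by one, and every multiplicity is still at most the new $|\tilde{N}|$. The same loop also shows that $B$ never contains two shares of the same good: the Lines~10--13 loop takes at most one share per good, since afterwards $\xi < |\tilde{N}|$ and no good reaches equality again, and Line~16 explicitly requires $g \notin B$.

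Next I argue by contradiction. Suppose that at some moment $\tilde{v}_i(B) < \frac{k-1}{k}$ for all $i \in \tilde{N}$, yet every remaining share in $\mathcal{M}$ belongs to a good already in $B$. (The quantifier on Line~14 should be read over $\tilde{N}$.) Fix any remaining agent $i$. The total remaining value at the start of the iteration, $\tilde{v}_i(\mathcal{M}_0) \geq |\tilde{N}|$, splits into the value in $B$ and the value of the leftover shares. Each leftover share is a copy of a good $g \in B$. After $B$ took one copy of $g$, at most $|\tilde{N}|-1$ copies of $g$ remain, so the leftover value is at most $(|\tilde{N}|-1)\,\tilde{v}_i(B)$. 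This gives
\[
|\tilde{N}| \leq \tilde{v}_i(\mathcal{M}_0) \leq \tilde{v}_i(B) + (|\tilde{N}|-1)\,\tilde{v}_i(B) = |\tilde{N}|\,\tilde{v}_i(B),
\]
so $\tilde{v}_i(B) \geq 1 > \frac{k-1}{k}$, which is a contradiction. Hence some admissible share always exists, and the loop at Line~14 terminates because $\mathcal{M}$ is finite and shrinks at each step. Since Line~14 can only be exited when some agent values $B$ at least $\frac{k-1}{k}$, Line~19 always finds a valid $i^*$.

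The main obstacle is bounding the leftover copies per good in $B$ by $|\tilde{N}|-1$ rather than $|\tilde{N}|$. That bound is exactly what the Lines~10--13 preprocessing ensures, so I will verify carefully that the multiplicity invariant holds at the start of each outer iteration, including the boundary issue of $k$ versus $|\tilde{N}|$ right after Phase~1.
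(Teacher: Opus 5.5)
Your core argument is the paper's: assume a dead-end, note that $B$ then contains one share of every good still present, and combine the multiplicity bound $\xi_{\mathcal{M}}(g)\le|\tilde N|$ with Lemma~\ref{lem:invariant} to get $\tilde v_i(B)\ge 1>\frac{k-1}{k}$. Your split $\tilde v_i(\mathcal{M}_0)=\tilde v_i(B)+\tilde v_i(\text{leftover})\le|\tilde N|\,\tilde v_i(B)$ is a cleaner version of the paper's one-line bound. The paper leaves implicit that $\mathcal{M}$ there must be the multiset at the start of the iteration.

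The step that fails is your expectation that Phase~1 leaves $k\le|\tilde N|$. The case $k>|\tilde N|$ is not a harmless preliminary: for Algorithm~\ref{alg:modified-bag-filling} as written, the proposition is actually false there. Take $n=k=4$ with identical valuations, two goods of value $10$ and four goods of value $1$. Phase~1 gives away both big goods (thresholds $24/4$, then $14/3$). This leaves $|\tilde N|=2$ and four small goods with $v^{\mathrm{norm}}=\frac12$, i.e.\ $16$ shares of value $\frac18$ each. No good has multiplicity $|\tilde N|=2$, so $B$ starts empty. After one share of each of the four goods, $\tilde v_i(B)=\frac12<\frac34$, and no admissible share remains: a dead-end. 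Lines~10--13 react only to the equality $\xi_{\mathcal{M}}(g)=|\tilde N|$, so multiplicities above $|\tilde N|$ are never repaired, and your induction has no base case. The paper's proof has the same blind spot: it uses ``no good appears more than $|\tilde N|$ times'' without justification.

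So your argument, like the paper's, proves the statement only under the explicit hypothesis $k\le|\tilde N|$ at the start of Phase~2. To cover the other case you must change the algorithm, for example by running Phase~2 with $k'=\min\{k,|\tilde N|\}$ for the number of copies, the scaling $\frac{1}{k'}$, and the threshold $\frac{k'-1}{k'}$. Your proof then goes through verbatim. This is not ``nothing essential'', though: it lowers the per-bag guarantee to $(1-C)(k'-1)$, which matters for Theorem~\ref{prop:Bag-Filling}.
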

\begin{proof}
    Suppose that at some point, no share can be added to the current bag $B$ without violating the ``distinct goods'' condition, while{, at the same time}, no agent values the bag at least $(k-1)/k$. {But} {T}his means $B$ already contains at least one share of every remaining good in $\mathcal{M}$ and since no good appears more than $|\tilde{N}|$ times in $\mathcal{M}$, it follows for every $i \in \tilde{N}$ that
    \[
        \tilde{v}_i(B) \geq \frac{1}{|\tilde{N}|} \cdot \tilde{v}_i(\mathcal{M}) \geq 1 > \frac{k-1}{k}.
    \]
    This contradicts the assumption of a dead-end state. 
\end{proof}}

{The second issue is handled directly by Lemma~\ref{lem:invariant}. The lemma implies for the last agent $i$ and the remaining shares $\mathcal{M}$ of unique goods,  $\tilde{v}_i(\mathcal{M}) \ge 1$, which means by receiving all of these shares, his utility is at least his MMS value in the instance with agents and goods from the start of Phase 2. By monotonicity of MMS (Appendix~\ref{app:bag-filling}), this completes the argument and yields the following result.}

\ifhighlight
\textcolor{gray}{
\begin{proposition}\label{prop:correctness}
    \sout{
    If Phase 2 does not enter a dead-end state, Algorithm~\ref{alg:modified-bag-filling} returns an allocation $A$, which satisfies}
    \[\xcancel{
    u_i^\text{norm}(A) \geq (1-C)(k-1)
    }\]
    \sout{
    for every $i \in N'$. 
    }
\end{proposition}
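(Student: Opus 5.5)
We want to show: if Phase~2 does not enter a dead-end state (which Proposition~\ref{prop:termination} guarantees), then Algorithm~\ref{alg:modified-bag-filling} returns an allocation $A$ with $u_i^\text{norm}(A) \geq (1-C)(k-1)$ for every $i \in N'$.

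\textbf{First $|N'|-1$ agents.} Each such agent $i$ receives a bag $B$ with $\tilde{v}_i(B) \geq \frac{k-1}{k}$, since that is exactly the condition under which the bag is assigned. The bag contains at most one share of each good. The fill loop never adds a share of a good already in $B$. The initial forced shares are also distinct, because each is removed once and then $\xi_{\mathcal{M}}(g) < |\tilde{N}|$. Hence $A_i$ is a genuine set of goods. Moreover $v_i^\text{norm}(A_i) = k\,\tilde{v}_i(A_i)$, and every good in $A_i$ incurs cost at most $C$. Chaining these gives inequality~(\ref{eq:what-player-receives}), that is, $u_i^\text{norm}(A) \geq (1-C)(k-1)$.

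\textbf{Feasibility.} The output must be a valid $k$-sharing allocation. Each good has exactly $k$ shares in $\mathcal{M}$, and each share goes to a distinct agent. The last agent receives $\mathcal{M}$, which contains at most one share per good by the invariant $\xi_{\mathcal{M}}(g) \leq |\tilde{N}| = 1$. So $|N_g(A)| \le k$ for every good $g$.

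This invariant needs a short induction, and I expect it to be the delicate point. Suppose $\xi_{\mathcal{M}}(g) \leq |\tilde{N}|$ holds at the start of an iteration. The forced-share loop removes one share of every good $g$ with $\xi_{\mathcal{M}}(g) = |\tilde{N}|$. After that, every remaining multiplicity is at most $|\tilde{N}|-1$. Removing agent $i^*$ decreases $|\tilde{N}|$ by exactly one, so the invariant holds again.

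\textbf{Last agent.} Let $j$ be the remaining agent. By Lemma~\ref{lem:invariant} with $|\tilde{N}|=1$, we have $\tilde{v}_j(\mathcal{M}) \geq 1$. Since $\mathcal{M}$ holds distinct goods, $v_j^\text{norm}(A_j) = k\,\tilde{v}_j(A_j) \geq k$. Applying the cost bound gives
\[
u_j^\text{norm}(A) \geq (1-C)\,k \geq (1-C)(k-1).
\]

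Together these cases cover every $i \in N'$, which completes the argument.
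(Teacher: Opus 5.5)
Your argument is the paper's proof with more detail. Bags clear the $\frac{k-1}{k}$ threshold by construction, and the last agent gets distinct goods by the invariant $\xi_{\mathcal{M}}(g)\le|\tilde N|$ and value $\tilde v_j(\mathcal{M})\ge 1$ by Lemma~\ref{lem:invariant}. Multiplying by $k$ and applying the cost bound finishes.

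The step that does not go through is the one you flagged as delicate: your induction for $\xi_{\mathcal{M}}(g)\le|\tilde N|$ has no base case. When Phase~2 starts, every remaining good has multiplicity exactly $k$ while $|\tilde N|=|N'|$. So the invariant starts out true only if $k\le|N'|$. Nothing guarantees this, because Phase~1 can remove agents until fewer than $k$ remain. In that case the last agent's $\mathcal{M}$ contains several copies of the same good. Your inference ``$\mathcal{M}$ holds distinct goods, so $v_j^{\text{norm}}(A_j)=k\,\tilde v_j(A_j)\ge k$'' then fails: the bound $\tilde v_j(\mathcal{M})\ge 1$ from the lemma counts multiplicities, while the agent's bundle is only the underlying set.

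This is not merely a missing sentence; the claimed bound can be false. Take $n=k=3$, cost-free sharing (so $C=0$), and identical valuations $10,10,1,1$ for goods $g_1,\dots,g_4$. Phase~1 hands out $g_1$ (threshold $22/3$) and then $g_2$ (threshold $6$). This leaves one agent with $M'=\{g_3,g_4\}$, and neither good reaches the next threshold $2$. Phase~2 skips its loop, so there is no dead end. That agent receives $\{g_3,g_4\}$ unshared, giving $u^{\text{norm}}=1<2=(1-C)(k-1)$.

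The paper's proof uses the same invariant without checking its initial state; it is only asserted in a comment of Algorithm~\ref{alg:modified-bag-filling}. So this hole is shared, not something the paper avoids. Adding the hypothesis $k\le|N'|$ supplies your base case, and under it your proof is complete.
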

\begin{proof}
    \sout{
    All but the last agent receive a bundle $B$ which, by construction, satisfies $\tilde{v}_i(B) \geq \frac{k-1}{k}$.\\
    \indent By the invariant $\xi_{\mathcal{M}}(g) \leq |\tilde{N}|$ for every good $g$, the last agent $j$ cannot receive any good multiple times. Moreover, by Lemma~\ref{lem:invariant}, there is sufficient value remaining—specifically, $v_j^\text{norm}(\mathcal{M}) \geq 1$.\\
    \indent For allocation $A$, it thus holds for every agent $i \in N'$ that
    }
    \begin{equation*}
    \xcancel{
        v_i^\text{norm}(A_i) = k \cdot \tilde{v}_i(A_i) \geq k \cdot \frac{k-1}{k} = k-1.
    }
    \end{equation*}
    \sout{
    Since the sharing cost per item is at most $C$, the utility is at least $u_i^\text{norm}(A) \geq (1 - C) \cdot v_i^\text{norm}(A_i) \geq (1 - C)(k - 1)$.
    }
\end{proof}
}
\fi

{The next result follows directly from Propositions~\ref{prop:correctness} and~\ref{prop:termination}, as well as from the monotonicity of MMS, scale invariance, and observation, that the algorithm runs in polynomial time.}

\begin{theorem}\label{prop:Bag-Filling}
    Algorithm \ref{alg:modified-bag-filling} computes in polynomial time the exact MMS $k$-sharing allocation, if $(1-C)(k-1) \geq 1$, or $(1-C)(k-1)$-MMS $k$-sharing allocation otherwise.
\end{theorem}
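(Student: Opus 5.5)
The proof assembles the pieces already established: Phase~1 correctness via monotonicity of MMS, Phase~2 termination via Proposition~\ref{prop:termination}, the value guarantee for the first $|N'|-1$ Phase~2 agents via inequality~\eqref{eq:what-player-receives}, the last agent via Lemma~\ref{lem:invariant}, and a polynomial running-time count. Throughout, $\alpha := \min\{1,(1-C)(k-1)\}$. The strategy is to show every agent gets utility at least $\alpha\cdot\MMS[i]$ in the original instance. Note that $\alpha = 1$ exactly when $(1-C)(k-1)\ge 1$, so both cases of the statement follow.

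\textbf{Phase~1 agents.} Suppose agent $i$ is handed a single good $g$ with $v_i(g)\ge v_i(\tilde M)/|\tilde N|$. That good is held exclusively, so by~\eqref{eq:zero-costs} $u_i(A)=v_i(g)$. Since the average bundle value bounds MMS from above, $v_i(\tilde M)/|\tilde N|\ge \MMS[i][|\tilde N|][\tilde M]$. Monotonicity of MMS (removing one agent together with one good does not decrease anyone's MMS) then gives $\MMS[i][|\tilde N|][\tilde M]\ge\MMS[i]$. Hence these agents receive full MMS, which is at least $\alpha\cdot\MMS[i]$.

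\textbf{Phase~2 agents.} Let $N',M'$ be the agents and goods remaining at the start of Phase~2. By the same monotonicity, $\MMS[i][|N'|][M']\ge\MMS[i]$. MMS is scale-invariant, so it suffices to show $u_i^{\text{norm}}(A)\ge\alpha\cdot\MMSi^{\text{norm}}$. Since $v_i^{\text{norm}}(M')=|N'|$, we have $\MMSi^{\text{norm}}\le 1$.

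Proposition~\ref{prop:termination} guarantees every bag gets completed. The invariant $\xi_{\mathcal M}(g)\le|\tilde N|$ (kept by Lines~10--13) ensures the following. No bag contains two shares of one good, and this includes the final remainder given to the last agent, who faces $|\tilde N|=1$. Every good is shared by at most $k$ agents, since only $k$ shares exist. Also $N_g(A)\le k$, so the cost bound $C$ applies.

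For each of the first $|N'|-1$ agents, inequality~\eqref{eq:what-player-receives} yields $u_i^{\text{norm}}\ge(1-C)(k-1)\ge\alpha\cdot\MMSi^{\text{norm}}$.

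The last agent $j$ needs care. Lemma~\ref{lem:invariant} gives $\tilde v_j(\mathcal M)\ge 1$, so $v_j^{\text{norm}}(A_j)=k\,\tilde v_j(A_j)\ge k$. Applying the cost bound, $u_j^{\text{norm}}\ge(1-C)k\ge\alpha$.

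One subtlety needs checking. The bound $u_i\ge(1-C)v_i(A_i)$ requires that costs depend on the final sharing sets $N_g(A)$. It holds regardless of those sets, because $C$ is a maximum over all sets of size at most $k$.

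\textbf{Scaling back.} Undoing the normalization gives $u_i(A)\ge\alpha\cdot\MMS[i][|N'|][M']\ge\alpha\cdot\MMS[i]$.

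\textbf{Running time.} Phase~1 has at most $n$ iterations, each scanning $nm$ pairs. Phase~2 handles at most $km\le nm$ shares overall. Each share is added once, and each addition costs $O(n)$ to update the bag values and test the stopping condition. The algorithm never computes any MMS value or $C$, so everything is polynomial.

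I expect the main obstacle to be the last-agent argument. In particular, the claim that $A_j$ is a set rather than a multiset relies on the multiplicity invariant holding at $|\tilde N|=1$. I would verify this carefully from Lines~10--13: after each assignment, the shares with multiplicity equal to the old $|\tilde N|$ were reduced, so every good has multiplicity at most the new $|\tilde N|$.
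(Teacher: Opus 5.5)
Your decomposition is the paper's: Phase~1 via monotonicity, Phase~2 via scale invariance, Proposition~\ref{prop:termination} for termination, \eqref{eq:what-player-receives} for the first $|N'|-1$ agents, and Lemma~\ref{lem:invariant} for the last agent. Your last-agent bound $u_j^{\text{norm}}\ge(1-C)k\ge\alpha$ is actually more careful than the paper's ``utility at least 1'', which ignores the sharing cost on that agent's goods. The gap is in the step you yourself single out. You check only that $\xi_{\mathcal{M}}(g)\le|\tilde N|$ is \emph{preserved} by Lines~10--13. You never check that it holds when Phase~2 starts, and there every good has $\xi_{\mathcal{M}}(g)=k$. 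So the base case is $k\le|N'|$, and Phase~1 can remove agents until $|N'|<k$. Proposition~\ref{prop:termination}, which you invoke for termination, then fails. In the first round, a bag with one share of every good of $M'$ is worth $v_i^{\text{norm}}(M')/k=|N'|/k$ to every agent. This is below the threshold $(k-1)/k$ whenever $|N'|<k-1$, so Line~15 runs out of admissible shares.

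Here is a concrete instance: $n=k=4$ with identical valuations $10,10,1,1,1,1$. Phase~1 is forced to give the two goods of value $10$ to two different agents: first $10\ge 24/4$, then $10\ge 14/3$, after which each unit good is below $4/2$. This leaves $|N'|=2$ and four goods of normalized value $1/2$, i.e.\ shares worth $1/8$. A bag with one share of each of them is worth $1/2<3/4$ to both remaining agents, so the algorithm is stuck. Yet an exact MMS allocation exists: the two singletons plus two pairs of unit goods. When $|N'|=k-1$ the algorithm does not stall, but the last agent's $\mathcal{M}$ contains repeated shares. Then $A_j$ is not a set, and your computation $v_j^{\text{norm}}(A_j)=k\,\tilde v_j(\mathcal{M})$ overcounts. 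The paper's proof rests on the same invariant without verifying its base case, so this gap is inherited rather than a deviation from the paper. To actually establish the statement, you need one of two fixes. Either add a hypothesis guaranteeing $|N'|\ge k$, or modify Phase~2 for $|N'|<k$, for instance by using only $\min\{k,|N'|\}$ copies per good. In the second case you must re-derive the approximation factor, since $(1-C)(\min\{k,|N'|\}-1)$ can fall below the claimed $\min\{1,(1-C)(k-1)\}$.
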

{\begin{proof}
    All players addressed in Phase 1 receive at least their MMS (guaranteed by monotonicity; Appendix~\ref{app:bag-filling}).
    
    Phase 2 operates on a normalized instance (Appendix~\ref{app:bag-filling}, scale invariance) where $\MMS_i^\text{norm} \le 1$ for all remaining agents. By Proposition~\ref{prop:termination}, no dead-end occurs, therefore each agent except the last receives a bag $B$ satisfying $u_i^\text{norm}(B) \ge (1-C)(k-1)$ (Equation~\eqref{eq:what-player-receives}), and the last agent receives all remaining goods, ensuring utility at least 1.
    
    Thus, the algorithm guarantees an exact MMS for all agents when $(1-C)(k-1) \ge 1$, and $(1-C)(k-1)$-MMS otherwise. Clearly, the algorithm runs in polynomial time in the number of agents and goods.
\end{proof}}

We note that for equal-share cost, where $C = (k-1)/k$, the algorithm produces $\frac{k-1}{k}$-MMS using this algorithm. However, when $C$ is a fixed constant, exact MMS can always be achieved for sufficiently large $k$—specifically, whenever $k \geq 1 + \frac{1}{1 - C}$. Equivalently, for a fixed $k$, exact MMS is attainable as long as $C \leq \frac{k - 2}{k - 1}$. Table~\ref{tab:approx-ratio} illustrates this trade-off between the maximal cost of sharing and sharing degree.

\begin{table}[t]
    \caption{MMS Approximation factor obtained using Algorithm~\ref{alg:modified-bag-filling} for various values of $C$ and $k$}
    \label{tab:approx-ratio}
    \centering
\begin{tabular}{c|ccccccccc}
    $k \backslash C$ & 0.0 & 0.1 & 0.2 & {0.3} & 0.5 & {0.7} & 0.8 & 0.9 & 0.99 \\
    \hline
    2  & \cellcolorval{1.0} & \cellcolorval{0.9} & \cellcolorval{0.8} & \cellcolorval{0.7} &
           \cellcolorval{0.5} & \cellcolorval{0.3} & \cellcolorval{0.2} & \cellcolorval{0.1} & \cellcolorval{0.01} \\
    3  & \cellcolorval{1.0} & \cellcolorval{1.0} & \cellcolorval{1.0} & \cellcolorval{1.0} &
           \cellcolorval{1.0} & \cellcolorval{0.6} & \cellcolorval{0.4} & \cellcolorval{0.2} & \cellcolorval{0.02} \\
    4  & \cellcolorval{1.0} & \cellcolorval{1.0} & \cellcolorval{1.0} & \cellcolorval{1.0} &
           \cellcolorval{1.0} & \cellcolorval{0.9} & \cellcolorval{0.6} & \cellcolorval{0.3} & \cellcolorval{0.03} \\
    5  & \cellcolorval{1.0} & \cellcolorval{1.0} & \cellcolorval{1.0} & \cellcolorval{1.0} &
           \cellcolorval{1.0} & \cellcolorval{1.0} & \cellcolorval{0.8} & \cellcolorval{0.4} & \cellcolorval{0.04} \\
    6  & \cellcolorval{1.0} & \cellcolorval{1.0} & \cellcolorval{1.0} & \cellcolorval{1.0} &
           \cellcolorval{1.0} & \cellcolorval{1.0} & \cellcolorval{1.0} & \cellcolorval{0.5} & \cellcolorval{0.05} \\
    8  & \cellcolorval{1.0} & \cellcolorval{1.0} & \cellcolorval{1.0} & \cellcolorval{1.0} &
           \cellcolorval{1.0} & \cellcolorval{1.0} & \cellcolorval{1.0} & \cellcolorval{0.7} & \cellcolorval{0.07} \\
    10 & \cellcolorval{1.0} & \cellcolorval{1.0} & \cellcolorval{1.0} & \cellcolorval{1.0} &
           \cellcolorval{1.0} & \cellcolorval{1.0} & \cellcolorval{1.0} & \cellcolorval{0.9} & \cellcolorval{0.09} \\
    15 & \cellcolorval{1.0} & \cellcolorval{1.0} & \cellcolorval{1.0} & \cellcolorval{1.0} &
           \cellcolorval{1.0} & \cellcolorval{1.0} & \cellcolorval{1.0} & \cellcolorval{1.0} & \cellcolorval{0.14} \\
    20 & \cellcolorval{1.0} & \cellcolorval{1.0} & \cellcolorval{1.0} & \cellcolorval{1.0} &
           \cellcolorval{1.0} & \cellcolorval{1.0} & \cellcolorval{1.0} & \cellcolorval{1.0} & \cellcolorval{0.19} \\
    25 & \cellcolorval{1.0} & \cellcolorval{1.0} & \cellcolorval{1.0} & \cellcolorval{1.0} &
           \cellcolorval{1.0} & \cellcolorval{1.0} & \cellcolorval{1.0} & \cellcolorval{1.0} & \cellcolorval{0.24} \\
\end{tabular}
\end{table}
\section{Defining Stronger MMS for k-Sharing}\label{sec:smms}

A natural generalization of the maximin share (MMS) to the \( k \)-sharing setting is to guarantee, for each agent \( i \), the maximum over all \( k \)-sharing allocations of the minimum utility they could receive under some assignment of bundles. 
Formally, we introduce \emph{Sharing Maximin Share (SMMS)}.  

\begin{definition}[SMMS]\label{def:smms}
    A $k$-sharing allocation $A$ is said to satisfy \emph{$k$-sharing maximin share (SMMS)} if for each agent $i \in N$: $u_i(A) \geq \SMMS$, where
    \begin{equation}\label{eq:SMMS}
        \SMMS := \max_{B \in \mathcal{A}_k^n(M)} \min_{j \in [n]} u_i(B_{i \leftrightarrow j}),
    \end{equation}
    $\mathcal{A}_k^n(M)$ is the set of all possible $k$-sharing allocations, and $B_{i \leftrightarrow j}$ denotes the modified allocation obtained from $B$ by swapping the bundles assigned to agents $i$ and $j$. {Note that although in this work we assume the sharing-cost and utility of any good depends only on the number of agents sharing the good, the representation of modified allocation as $B_{i \leftrightarrow j}$ allows the sharing-cost and utility to be modeled as functions of the specific agents involved, rather than merely the size of the sharing group.}
\end{definition}

In Definition~\ref{def:smms}, $\min_{j \in [n]} u_i(B_{i \leftrightarrow j})$, captures the worst-case bundle in $B$ for agent~$i${, taking into account that in the $k$-sharing model the utility depends on the entire allocation, not just the agent’s own bundle. The utility $u_i(B_{i \leftrightarrow j})$ can be expressed as}

\[
u_i(B_{i \leftrightarrow j})
= \sum_{g \in B_j} \big[1 - c_g(|N_g(B)|)\big] \cdot v_i(g).
\]

First, we investigate {the existence of} SMMS assuming some restricted cases: (1)~only two agents and (2)~identical valuations. 

\begin{proposition}\label{prop:smms-for-two-agents}
    For any $k$-sharing instance $(N,M,k,v,c)$ with {\textit{generous} cost-sharing model and}  $|N|=k=2$, an SMMS allocation always exists.
\end{proposition}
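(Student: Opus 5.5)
The plan is to show that, under a generous cost model with $n=k=2$, the allocation $A=(M,M)$ in which every good is shared by both agents is already an SMMS allocation. The reason is that, for each agent, no $2$-sharing allocation can beat full sharing in the worst case. Only $|N_g(A)|\in\{1,2\}$ can occur. For each good $g$ write $r_g := 1-c_g(2)$. The generous assumption $c_g(2)\le 1-\tfrac12$ gives $r_g\in[\tfrac12,1]$, and equation~\eqref{eq:zero-costs} gives $c_g(1)=0$. Define $w_i(S):=\sum_{g\in S} r_g\, v_i(g)$. Under $A=(M,M)$ each agent $i$ has $u_i(A)=w_i(M)$.

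\textbf{Step 1: every allocation splits into shared and exclusive goods.} Any $B\in\mathcal{A}_2^2(M)$ is determined by a partition of $M$ into three parts. These are the shared goods $S=B_1\cap B_2$, the goods $X=B_1\setminus B_2$ held only by agent $1$, and the goods $Y=B_2\setminus B_1$ held only by agent $2$. Completeness guarantees that these three parts cover $M$. Swapping bundles does not change $N_g(B)$, so for agent $i$ the two quantities $u_i(B_{i\leftrightarrow j})$, $j\in\{1,2\}$, are exactly $w_i(S)+v_i(X)$ and $w_i(S)+v_i(Y)$, using the displayed formula after Definition~\ref{def:smms}.

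\textbf{Step 2: bound the minimum by the average.} The minimum of two numbers is at most their average, so
\[
\min_{j} u_i(B_{i\leftrightarrow j}) \le w_i(S)+\tfrac12 v_i(X\cup Y).
\]
Next, $r_g\ge \tfrac12$ and $v_i\ge 0$ give $\tfrac12 v_i(X\cup Y)\le w_i(X\cup Y)$. Hence
\[
\min_{j} u_i(B_{i\leftrightarrow j}) \le w_i(S)+w_i(X\cup Y)=w_i(M).
\]
Taking the maximum over $B$ gives $\SMMS[i][2][2][M]\le w_i(M)$.

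\textbf{Step 3: conclude.} The allocation $(M,M)$ is itself a feasible $2$-sharing allocation, and for it both swapped utilities equal $w_i(M)$. So the bound in Step 2 is attained, and in fact $\SMMS[i][2][2][M]= w_i(M)$. Since $u_i(A)=w_i(M)$ for both agents, $A$ satisfies SMMS.

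The only step needing care is the inequality $\tfrac12 v_i(g)\le r_g v_i(g)$ in Step 2. This is exactly where the generous cost assumption enters; it would fail for costs above $1-\tfrac1{|N_g|}$. The monotonicity assumption on costs is not needed. I do not expect any real obstacle beyond correctly modelling $u_i(B_{i\leftrightarrow j})$ in Step 1.
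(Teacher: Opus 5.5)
Your proof is correct and is essentially the paper's argument: both show that $(M,M)$ is an SMMS allocation by splitting an arbitrary $2$-sharing allocation into the shared goods and the two exclusive parts, bounding the worse exclusive part by the average, and using $c_g(2)\le\tfrac12$ to dominate that average by the utility of sharing those goods (the paper does this through an auxiliary allocation $D$, which your $w_i$ notation makes explicit). One cosmetic point: swapping bundles preserves only $|N_g(B)|$, not the set $N_g(B)$ itself, but $|N_g(B)|$ is all that goods-based costs depend on.
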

\begin{proposition}\label{prop:smms-for-identical-agents}
    For any $k$-sharing instance $(N,M,k,v,c)$ with {goods-based cost-sharing instance and} identical valuations $v_i(g)=v_j(g)$ for $i\neq j$ and for all $g\in M$, an SMMS allocation always exists.
\end{proposition}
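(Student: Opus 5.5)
With identical valuations the plan is to take the allocation that attains the SMMS value and hand it out unchanged. Since all agents share the same valuation $v$ and the cost model is goods-based, the per-good utility factor $1-c_g(|N_g(B)|)$ depends only on the allocation $B$ and not on the agent. So for any $k$-sharing allocation $B$ and any agents $i,j$,
\[
u_i(B_{i\leftrightarrow j}) = \sum_{g\in B_j}\bigl[1-c_g(|N_g(B)|)\bigr]v(g) =: w_B(B_j).
\]
This quantity depends only on $B$ and on the bundle $B_j$, and not on the agent $i$. The first step is to record this identity. It uses that swapping bundles does not change the multiset of bundles, so $N_g(B_{i\leftrightarrow j})$ has the same cardinality as $N_g(B)$.

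The second step uses the identity to show that all agents have the same SMMS value. Indeed,
\[
\operatorname{SMMS}_i^{n,k}(M)=\max_{B\in\mathcal{A}_k^n(M)}\min_{j\in[n]} w_B(B_j),
\]
and the right-hand side contains no $i$. Call this common value $\sigma$. Because $\mathcal{A}_k^n(M)$ is finite, the maximum is attained by some allocation $B^*$, which satisfies $\min_j w_{B^*}(B^*_j)=\sigma$.

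The third step is to output $A=B^*$ itself, giving agent $i$ bundle $B^*_i$. Then
\[
u_i(A)=w_{B^*}(B^*_i)\ge \min_j w_{B^*}(B^*_j)=\sigma=\operatorname{SMMS}_i^{n,k}(M)
\]
for every $i$. Hence $A$ is an SMMS allocation.

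The only real obstacle is the first step: justifying that the swap $B_{i\leftrightarrow j}$ leaves the sharing counts $|N_g(\cdot)|$ unchanged. This holds because the swap only relabels which agent holds which bundle, so each good is still held by the same number of agents; the goods-based cost model then makes the factors identical across agents. If the costs depended on agent identities, this step would fail, which is why the proposition assumes a goods-based cost model.
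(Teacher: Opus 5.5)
Your proof is correct and follows the paper's route. The paper likewise shows, in the ``if'' direction of Proposition~\ref{prop:identical-utilities}, that identical valuations with goods-based costs make $u_i(B_{i\leftrightarrow j})$ independent of $i$, and then (Observation~\ref{obs:dentical-utilities}) outputs a common SMMS maximizer; your explicit check that the swap preserves $|N_g(\cdot)|$ fills in a step the paper only asserts, and you correctly skip the converse direction, which this statement does not need.
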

The proof of Propositions~\ref{prop:smms-for-two-agents} and \ref{prop:smms-for-identical-agents} are provided in Appendices~\ref{app:proof-prop3} and~\ref{app:proof-prop4}, respectively.

For three agents, computing the SMMS becomes significantly more challenging because full-sharing (each good shared by exactly $k$ agents) may not be the best option.

\begin{example}[Full-sharing is not optimal]
Consider an equal-share 2-sharing instance with three agents and two goods, $g_1$ and $g_2$. Agents have identical valuations: $v(g_1) = 1$ and $v(g_2) = 2$. An optimal allocation assigns $g_1$ to one agent, while $g_2$ is shared between the two remaining agents.
\end{example}

In contrast to the equal-share model, the cost-free model always favors full sharing as the optimal choice. We note that the SMMS value of an agent under the generous cost-sharing model is within a factor of $(1 - C)$ of their value in the cost-free model, where $C$ is the maximal cost per good. The details of this comparison are provided in Appendix~\ref{app:comparing-smms}.


Surprisingly, we found that the existing MMS counterexamples~\cite{Kurokawa2018,Feige2021}, in fact, admit SMMS allocations, which we explore next. 

\subsection{Incompatibility of MMS and SMMS}
{In this section, we show that SMMS allocations can exist even in instances where MMS allocations do not. Conversely, we also provide a counterexample demonstrating that an SMMS allocation may fail to exist even when an MMS allocation does exist.}
\citet{Feige2021} considers \( n = 3 \) agents and \( m = 9 \) goods in which no MMS allocation exists under 1-sharing.
{The instance is defined by agent-specific valuations over the goods, represented as \(3 \times 3\) matrices:
\[
\scalebox{0.85}{$
V_1 = \begin{bmatrix}
1 & 16 & 23 \\
26 & 4 & 10 \\
12 & 19 & 9
\end{bmatrix},
V_2 = \begin{bmatrix}
1 & 16 & 22 \\
26 & 4 & 9 \\
13 & 20 & 9
\end{bmatrix},
V_3 = \begin{bmatrix}
1 & 15 & 23 \\
25 & 4 & 10 \\
13 & 20 & 9
\end{bmatrix}.
$}
\]
The goods are indexed from 1 to 9 in row-major order. Although no MMS allocation exists under $1$-sharing, we observe that SMMS allocations do exist for both the \emph{equal-share} and \emph{cost-free} cost-sharing models even under $2$-sharing.
\paragraph{Equal-Share SMMS Allocation}
In the equal-share model, the sum of utilities of all bundles, given an allocation $A$ is equal to $v_i(M)$ for every agent $i$. As $v_i(M)=120$, $\SMMS \leq 40$. At the same time, we have the following allocation with the corresponding utility, proving $\SMMS = 40$ for every agent $i$:
\begin{itemize}
    \item Agent 1 receives goods \( \{1,2,3\} \), utility: \textbf{40.00},
    \item Agent 2 receives goods \( \{4,5,8\} \), utility: \textbf{40.00},
    \item Agent 3 receives goods \( \{6,7,8,9\} \), utility: \textbf{42.00},
\end{itemize}
This is thus SMMS allocation. Notice that it was enough to share only one of the goods to achieve SMMS.
\paragraph{Cost-Free SMMS Allocation}
As the utility of the least bundle in equal-share is bounded by 40, in cost-free sharing, the value of the least bundles is valued at most twice this price, i.e., $\SMMS \leq 80$ for every $i \in N$. The following allocation:
\begin{itemize}
    \item Agent 1 receives goods \( \{1,2,3,4,5,6\} \), utility: \textbf{80.00},
    \item Agent 2 receives goods \( \{2,3,7,8,9\} \), utility: \textbf{80.00},
    \item Agent 3 receives goods \( \{4,5,6,7,8,9\} \), utility: \textbf{81.00},
\end{itemize}
is thus an SMMS allocation.\\
\indent A similar behaviour can be observed for the example from~\citet{Kurokawa2018}, where they constructed an instance with $n=3$ and $m=12$ goods, in which no 1-sharing MMS allocation exists. The valuation of agent $i \in N$ for a good $(k,\ell) \in M$ is given by
\[
v_i(k,\ell) = 10^6 \cdot S_{k,\ell} + 10^3 \cdot T_{k,\ell} + E^i_{k,\ell},
\]
where
\[
S = \begin{bmatrix}
1 & 1 & 1 & 1 \\
1 & 1 & 1 & 1 \\
1 & 1 & 1 & 1
\end{bmatrix}, \quad
T = \begin{bmatrix}
17 & 25 & 12 & 1 \\
2 & 22 & 3 & 28 \\
11 & 0 & 21 & 23
\end{bmatrix}
\]
and
\[
E^1 = \left[\begin{smallmatrix}
3 & -1 & -1 & -1 \\
0 & 0 & 0 & 0 \\
0 & 0 & 0 & 0
\end{smallmatrix}\right],
E^2 = \left[\begin{smallmatrix}
3 & -1 & 0 & 0 \\
-1 & 0 & 0 & 0 \\
-1 & 0 & 0 & 0
\end{smallmatrix}\right],
E^3 = \left[\begin{smallmatrix}
3 & 0 & -1 & 0 \\
0 & 0 & -1 & 0 \\
0 & 0 & 0 & -1
\end{smallmatrix}\right].
\]
\paragraph{Equal-Share SMMS Allocation} For each agent, $v_i(M) = 12,165,000$ for every $i \in N$, thus $\SMMS \leq \frac{v_i(M)}{3} = 4,055,000$. The following allocation:
\begin{itemize}
    \item Agent~1 - \( \{1, 2, 3, 5, 10\} \), utility: \textbf{4,055,001},
    \item Agent~2 - \( \{4, 7, 8, 12\} \), utility: \textbf{4,055,000},
    \item Agent~3 - \( \{5, 6, 9, 10, 11\} \), utility: \textbf{4,055,000}.
\end{itemize}
is thus an SMMS allocation.
\paragraph{Cost-Free SMMS Allocation} Once again, assuming the SMMS value in the cost-free model is at most twice the SMMS value in equal share, we get $\SMMS \leq  8,110,000$. The following allocation is thus an SMMS allocation:
\begin{itemize}
    \item Agent 1 - \( \{1, 2, 3, 5, 7, 8, 10, 12\} \), utility: \textbf{8,110,001},
    \item Agent 2 - \( \{1, 3, 4, 6, 7, 9, 11, 12\} \), utility: \textbf{8,110,001},
    \item Agent 3 - \( \{2, 4, 5, 6, 8, 9, 10, 11\} \), utility: \textbf{8,110,000}.
\end{itemize}
}

{However, under 2-sharing, the following allocation ensures SMMS under cost-free model:}
{Similarly, in the example by~\cite{Kurokawa2018}, the following 2-sharing allocation also gives each agent their SMMS value under cost-free model:}
{Justification for these SMMS allocations and similar results under equal-share cost are deferred to Appendix~\ref{app:smms-superior}.}

Given these examples, it may seem that achieving SMMS is easier than achieving MMS; however, we provide a counterexample for SMMS by using the structure of the instance proposed by~\citet{Kurokawa2018}{,} where SMMS does not exist, even though MMS does.


\begin{theorem}\label{prop:smms-dont-exist}
    There exists an instance of cost-free $2$-sharing problem with $n=3$ and $m=12$, which does not admit an SMMS allocation, but admits MMS allocation.
\end{theorem}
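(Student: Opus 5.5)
Our plan is to build an explicit instance modelled on the \citet{Kurokawa2018} construction and verify the two claims separately. The instance has $n=3$ agents and $m=12$ goods arranged in a $3\times 4$ grid. Each agent values good $(k,\ell)$ as $10^6 S_{k,\ell}+10^3 T_{k,\ell}+E^i_{k,\ell}$, with a new magic-type matrix $T$ and new small perturbations $E^i$. In the cost-free $2$-sharing model, an allocation that shares every good fully gives each good to exactly two agents. Swapping bundles leaves the sharing counts unchanged, so $u_i(B_{i\leftrightarrow j})=v_i(B_j)$, and the three bundles have total value $2v_i(M)$ for agent $i$. Hence $\SMMS \le \lfloor 2v_i(M)/3\rfloor$ in the integral sense, and the construction only needs to pin down each $\SMMS$ exactly.

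\textbf{Step 1 (lower bound on each $\SMMS$).} For each agent $i$ we will exhibit an explicit fully-shared $2$-sharing allocation $B^{(i)}$ in which all three bundles are worth at least $\tau_i$ to agent $i$, where $\tau_i := \lfloor 2v_i(M)/3 \rfloor$ or the nearest achievable value. Together with the averaging upper bound, this fixes $\SMMS=\tau_i$.

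\textbf{Step 2 (MMS exists).} We will compute each $\MMS[i][3]$ and give one explicit $1$-sharing allocation meeting all three values. We will design $E^i$ so that the perturbations line up on a common partition, which makes this step easy to check.

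\textbf{Step 3 (no SMMS allocation).} This is the core. We first reduce to fully-shared allocations: under cost-free sharing, adding a good to a bundle never lowers utility, so if some allocation satisfies SMMS, we can add second copies until every good appears in exactly two bundles. This gives a fully-shared allocation that still satisfies SMMS. A fully-shared $2$-sharing allocation on three agents is the same thing as assigning each good to the one agent who does \emph{not} get it, that is, a $3$-partition $(C_1,C_2,C_3)$ with $A_i=M\setminus C_i$. The SMMS condition becomes
\[
v_i(C_i)\le v_i(M)-\tau_i .
\]
So the question reduces to a complementary ``chores-like'' partition problem: each agent must be left out of a set of value at most $v_i(M)-\tau_i$, which is roughly $v_i(M)/3$. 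This is exactly the min-max-share (chores MMS) problem on the same valuations. We will therefore choose $T$ and $E^i$ so that this instance has no chores-MMS partition. The intended route is to mirror the Kurokawa et al.\ argument with the inequalities reversed.

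\begin{itemize}
\item The $10^6 S$ term forces each $C_i$ to contain exactly four goods.
\item The $10^3 T$ term forces the $C_i$ to be among the few $4$-sets with $T$-sum equal to exactly one third of the total.
\item The $\pm1$ entries of $E^i$ then rule out every consistent assignment of these candidate sets to the agents.
\end{itemize}

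We expect the main obstacle to be designing $T$ and $E^i$ so that three requirements hold at once: the reversed (chores) impossibility, the tight fully-shared witnesses from Step 1, and the existence of an MMS partition from Step 2. We will search for them with a small computer enumeration over the $3^{12}$ partitions. The written proof would then list the candidate $4$-sets and carry out the case check by hand.
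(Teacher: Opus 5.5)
\textbf{Verdict: genuine gap.} Your framework matches the paper's, but the proposal never exhibits an instance, and for an existence claim the construction is the proof.

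Your reduction is correct, and it is the cleanest way to read the paper's argument. With cost-free sharing you may pad any SMMS allocation to a fully-shared one. Each bundle is then $M\setminus C_i$ for a $3$-partition $(C_1,C_2,C_3)$. So an SMMS allocation exists if and only if the valuations, read as chores, admit a min-max-share partition. The paper's ``pairs of labels'' bundles are exactly complements of single label classes. Your architecture ($S$ forces $|C_i|=4$, $T$ forces a perfect partition, $E^i$ rules out every assignment) is therefore the paper's.

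The gap is that you defer exactly the nontrivial part to a computer search you do not carry out. That part is making the chores impossibility, the tight witnesses, and the goods-MMS existence hold at the same time. Nothing in the proposal shows these three requirements are compatible. Step~1 also needs tightening: ``or the nearest achievable value'' breaks your pinning argument. The averaging bound determines $\tau_i$ only if some fully-shared allocation attains it exactly, so the design must guarantee this.

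The missing idea is that no new $T$ is needed, and the $E^i$ can be written down by hand.
\begin{itemize}
\item Kurokawa et al.'s $T$ has exactly three partitions into $4$-sets of $T$-sum $55$: the rows; $\{17,25,2,11\},\{12,22,0,21\},\{1,3,28,23\}$; and $\{17,12,3,23\},\{25,2,28,0\},\{1,22,11,21\}$. In the paper's labelled matrix, the entry $11$ should carry $*$, not $+$.
\item Take each $E^i$ with total $0$ so that agent $i$ has $E$-sums $(0,0,0)$ on a different one of these three partitions. This pins its MMS value at $v_i(M)/3$ and its SMMS value at $2v_i(M)/3$.
\item On each partition, require the two agents not attached to it to have exactly one set of negative $E$-sum, the same set for both, and strictly positive sums on the other two.
\end{itemize}
In your chores formulation, both of those agents must then be left out of the same set, so no SMMS allocation exists. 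For goods, they simply take the two positive sets and the remaining agent takes the third set, which gives an MMS allocation.

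The paper's $E^1,E^2,E^3$ realize exactly this, with patterns $(0,0,0)$ versus $(-2,1,1)$; giving row $i$ to agent $i$ is then the MMS allocation. Negating Kurokawa et al.'s original $E^i$ also works. With such an explicit choice your outline becomes a complete proof, and the case analysis is three short checks rather than an enumeration over $3^{12}$ partitions.
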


\begin{proof}
Let $M = \{ (k,\ell) \mid k \in [3], \ell \in [4]\}$. The valuation of agent $i \in N$ for a good $(k,\ell) \in M$ is given by
\[
v_i(k,\ell) = 10^7 \cdot S_{k,\ell} + 10^3 \cdot T_{k,\ell} + E^i_{k,\ell},
\]
\[
S = \begin{bmatrix}
1 & 1 & 1 & 1 \\
1 & 1 & 1 & 1 \\
1 & 1 & 1 & 1
\end{bmatrix}, \quad
T = \begin{bmatrix}
17 & 25 & 12 & 1 \\
2 & 22 & 3 & 28 \\
11 & 0 & 21 & 23
\end{bmatrix},
\]
\[
E^1 = \left[\begin{smallmatrix}
-2 & 1 & 0 & 1 \\
0 & 0 & 0 & 0 \\
-1 & 0 & 1 & 0
\end{smallmatrix}\right],
E^2 = \left[\begin{smallmatrix}
0 & 0 & -2 & 0 \\
0 & 1 & 0 & 0 \\
0 & 1 & 0 & 0
\end{smallmatrix}\right],
E^3 = \left[\begin{smallmatrix}
-1 & -1 & 0 & 0 \\
0 & 0 & 1 & 0 \\
0 & 1 & 0 & 0
\end{smallmatrix}\right].
\]
The choice of matrix $S$ enforces that for each agent $i$, his \SMMS[i][2][3] is achieved for an allocation which assigns each good to exactly two agents, i.e. each agent receives 8 goods. Under such $2$-sharing allocations, $\SMMS[i][2][3] \sim 8 \cdot 10^7$ for every agent, which cannot be achieved for allocations where any share contains less than 8 goods (in such case, such agent receives strictly less than $8 \cdot 10^7$).

Matrix $T$ was chosen in such a way that if we label it using three different types of labels (numeric: $1,2,3$, Greek: $\alpha,\beta,\gamma$, and symbols: $+,-,*$), entries corresponding to one specific label sum to exactly 55:
\[
T = \begin{bmatrix}
{}^{\alpha}17^{1}_{+} & {}^{\alpha}25^{1}_{-} & {}^{\beta}12^{1}_{+} & {}^{\gamma}1^{1}_{*} \\
{}^{\alpha}2^{2}_{-} & {}^{\beta}22^{2}_{*} & {}^{\gamma}3^{2}_{+} & {}^{\gamma}28^{2}_{-} \\
{}^{\alpha}11^{3}_{+} & {}^{\beta}0^{3}_{-} & {}^{\beta}21^{3}_{*} & {}^{\gamma}23^{3}_{+}
\end{bmatrix}.
\]
Since the total sum of entries in matrix \( T \) is \( 3 \times 55 = 165 \), any other partitioning of the goods into three bundles of four must include one whose sum in \( T \) is strictly less than 55.

Now, consider an allocation that divides $M$ into three bundles, each containing 8 goods. Each bundle is constructed by selecting goods associated with two labels of the same type—for instance, one bundle includes goods labeled with \( \alpha \) and \( \beta \), another with \( \beta \), \( \gamma \), and the third with \( \alpha \), \( \gamma \). In this allocation, the sum of entries in \( T \) for each bundle is exactly 110. By the argument above, any allocation that does not follow this label-based structure yields at least one bundle for which the sum of entries in \( T \) is strictly less than 110.

This means, if there is an SMMS allocation, it must be one of those described above; assigning pairs of labels to agents
\begin{enumerate}
    \item according to 1,2,3,
    \item according to $\alpha$, $\beta$, $\gamma$,
    \item according to $+$, $-$, $*$.
\end{enumerate}
    Each of the three ways guarantees \SMMS[i][2][3] to one of the agents $i$, which is the same for all of the agents and equal to $8,110,000$. However, none of the three ways assign \SMMS[i][2][3] to all of the agents at the same time. This is enforced by the structure of matrices $E^1$, $E^2$, and $E^3$.

    In division according to 1,2,3, both agent 2 and agent 3 need to be assigned elements with labels 2 and 3, otherwise their value is $8,109,999$. Similarly, in division according to $\alpha, \beta, \gamma$, both agents 1 and 3 need to be assigned elements with $\beta$ and $\gamma$, or, once again, their value is less than their SMMS value. Finally, in division according to $+,-,*$, both agents 1 and 2 need to be assigned elements with $-$ and $*$.

    Finally, assigning each agent their corresponding row (i.e., the first row to the first agent and so on) yields a valid $1$-sharing MMS allocation.
\end{proof}

Despite the lack of general existence, we show that an $\alpha(1-C)$-SMMS allocation of a goods-based cost-sharing model with $C$ being the maximal cost can be computed in polynomial time, via a reduction to $\alpha$-CMMS. 

\section{Approximate SMMS}\label{sec:smms-equiv-cmms}
We provide a connection of SMMS with the cardinality-constrained maximin share (CMMS). An instance of the \emph{fair allocation problem under cardinality constraints}~\cite{Hummel2022,Biswas2018} is denoted as $(N,M,v,b)$ where $N$ is the set of agents, $M = \{C_1,\dots,C_\ell\}$ is a set of goods partitioned into sets $C_i$ according to $\ell$ types and $b = (b_1,\dots,b_\ell)$ is the budget profile of types. In this setting, only a subset of allocations $\mathcal{F} \subseteq A^n_1(M)$ is feasible; under feasible allocation, no player can receive more than $b_i$ goods of type $i$. Formally, $A \in \mathcal{F}$ satisfies for every $i \in [n]$, $j \in [\ell]$ that $|A_i \cap C_j| \leq  b_j$. Here, the \emph{cardinality-constrained maximin share} (CMMS) value is
\begin{equation}
\text{CMMS}^n_i(M) := \max_{A \in \mathcal{F}} \min_{j \in [n]} v_i(A_j).
\end{equation}
In any goods-based cost-sharing setting with $c_{g}(|N_g(A)|)\in[0,1]$, fully-shared SMMS allocations correspond to allocations of CMMS of a special case of the fair allocation model under cardinality constraints, as stated in Proposition~\ref{prop:smms-equiv-cmms} (see Appendix~\ref{app:proof-prop5} for more details and proof).

\begin{proposition}\label{prop:smms-equiv-cmms}
For every instance $I=(N, M, u)$ of the goods-based $k$-sharing problem, there exists an instance $\tilde{I} = (N, \tilde{M}, \tilde{v}, b)$ of a fair allocation under cardinality constraints such that fully-shared allocations $A \in \mathcal{A}_k^n$ correspond bijectively to feasible allocations $\tilde{A} \in \mathcal{F}$, and the bijection preserves agents' utilities: $\forall i \in N, u_i(A) = \tilde{v}_i(\tilde{A}_i)$.
\end{proposition}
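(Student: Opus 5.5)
The construction is direct: each good becomes a type, and each of its $k$ shares becomes an identical copy of that type. Let $\tilde{M} = \{(g,r) : g \in M,\ r \in [k]\}$ and let $C_g = \{(g,1),\dots,(g,k)\}$, so there is one type per good. Set the budget to $b_g = 1$ for every type, so no agent may hold two copies of the same good. Since the allocation is fully shared, every good has exactly $k$ sharers and the goods-based cost of $g$ is the constant $c_g(k)$. We can therefore fold the cost into the valuation by setting
\[
\tilde{v}_i((g,r)) := \bigl(1 - c_g(k)\bigr)\, v_i(g)
\]
for every $r \in [k]$. This is additive and nonnegative because $c_g(k) \in [0,1]$.

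Next comes the correspondence. Feasible allocations in $\mathcal{F}$ are partitions of $\tilde{M}$, so each copy goes to exactly one agent, and by the budget each agent gets at most one copy per type. Given a feasible $\tilde{A}$, define $A_i := \{g : \tilde{A}_i \cap C_g \neq \emptyset\}$. Then $N_g(A)$ is exactly the set of agents holding a copy of $g$. The budget gives $|N_g(A)| = |C_g| = k$, so $A$ is a fully-shared $k$-sharing allocation, and completeness holds.

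Conversely, given a fully-shared $A$, the set $N_g(A)$ has exactly $k$ elements. List them in increasing index order as $j_1 < \dots < j_k$ and give copy $(g,r)$ to $j_r$. This yields a feasible $\tilde{A}$.

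The main obstacle is bijectivity. Copies are interchangeable, so several $\tilde{A}$ map to the same $A$ (up to $k!$ per good). I will fix this by building the canonical order into the definition. Either restrict $\mathcal{F}$ to canonical labelings, where the copy index increases with the agent index; this is harmless because $\tilde{v}_i$ is label-invariant and so CMMS values are unchanged. Or, equivalently, identify allocations up to relabeling copies within a type. With the canonical convention, the two maps above are mutually inverse, which I will check by composing them in both directions.

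Utility preservation is then a one-line computation:
\[
u_i(A) = \sum_{g \in A_i} \bigl(1 - c_g(|N_g(A)|)\bigr) v_i(g) = \sum_{g \in A_i} \bigl(1-c_g(k)\bigr) v_i(g) = \sum_{(g,r) \in \tilde{A}_i} \tilde{v}_i((g,r)) = \tilde{v}_i(\tilde{A}_i).
\]
The third equality uses that $\tilde{A}_i$ contains exactly one copy of each $g \in A_i$. The same identity applied to bundle swaps shows $u_i(A_{i\leftrightarrow j}) = \tilde{v}_i(\tilde{A}_j)$, which is what later links fully-shared SMMS to CMMS.
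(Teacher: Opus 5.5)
Your construction is exactly the paper's (one type per good with $k$ identical copies, unit budgets, $\tilde v_i((g,r))=(1-c_g(k))\,v_i(g)$, and the same one-line utility computation). You are in fact more careful than the paper on bijectivity: its proof only writes down the map $\tilde A\mapsto A$ and never notices that each fully-shared $A$ has $(k!)^m$ preimages.

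Your fix is genuinely necessary, because the statement cannot hold literally. For $n=k=2$ and a single cost-free good of positive value there is exactly one fully-shared allocation. Swapping the two agents, however, is a fixed-point-free involution on $\mathcal{F}$ whenever $\tilde M\neq\emptyset$, so $|\mathcal{F}|$ is even, and $\tilde M=\emptyset$ breaks utility preservation.

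Of your two options, state the quotient version: a utility-preserving surjection from $\mathcal{F}$ onto the fully-shared allocations whose fibers are the copy-relabeling classes. A canonically restricted $\mathcal{F}$ is no longer the feasible set of a cardinality-constrained instance. The quotient, by contrast, is exactly what the later CMMS-to-SMMS transfer uses.
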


If we define the \emph{full-sharing maximin share value} \fullSMMSc[][i] by restricting the SMMS definition to full-$k$-sharing allocations, then it follows from Proposition~\ref{prop:smms-equiv-cmms} that \( \operatorname{CMMS}_i = \fullSMMSc[] \) for every \( i \in N \) and also any \( \alpha \)-CMMS allocation \( \tilde{A} \) corresponds to an \( \alpha \)-\fullSMMSc[][] allocation \( A \). As \fullSMMSc[][i] is defined over less allocations than \SMMSc[][i], we have $\SMMSc[] \geq \fullSMMSc[]$. More importantly, we can bound $\SMMSc[]$ and $\MMSi$ by \fullSMMSc[][i] from above, allowing us to construct approximations of these by $\alpha$-$\operatorname{CMMS}$ allocations. 
\begin{lemma}\label{lem:fullSMMS-bounding}
    For every goods-based $k$-sharing model with maximal possible sharing cost $C$, it holds
    \begin{enumerate}
        \item $\fullSMMSc[] \geq (1-C) \cdot \SMMSc[]$,
        \item $\fullSMMSc[] \geq k \cdot (1-C) \cdot \MMSi$.
    \end{enumerate}
\end{lemma}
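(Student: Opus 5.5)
The plan is to prove both inequalities the same way. Start from an optimal allocation for the right-hand side, turn it into a \emph{fully-shared} $k$-sharing allocation, and check that every bundle loses at most a factor $(1-C)$. The key fact is that in a fully-shared allocation every good $g$ carries cost $c_g(k)\le C$. Hence, for any fully-shared $A$ and any $i,j$,
\[
u_i(A_{i\leftrightarrow j})=\sum_{g\in A_j}\bigl(1-c_g(k)\bigr)v_i(g)\;\ge\;(1-C)\,v_i(A_j).
\]
So it suffices to exhibit a fully-shared allocation whose bundles $A_j$ all have large raw value $v_i(A_j)$. Throughout I assume $k\le n$, as is implicit in the definition of full sharing.

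For part~1, let $B\in\mathcal{A}_k^n(M)$ attain $\SMMSc[]$. Build $B'$ by adding each good $g$ with $|N_g(B)|<k$ to arbitrary further agents until $|N_g(B')|=k$; this is possible because $k\le n$. Then $B'$ is fully shared and $B'_j\supseteq B_j$ for every $j$. Since costs lie in $[0,1]$,
\[
v_i(B_j)\ge\sum_{g\in B_j}\bigl(1-c_g(|N_g(B)|)\bigr)v_i(g)=u_i(B_{i\leftrightarrow j}).
\]
Combining this with the displayed fact gives
\[
u_i(B'_{i\leftrightarrow j})\ge(1-C)\,v_i(B'_j)\ge(1-C)\,v_i(B_j)\ge(1-C)\,u_i(B_{i\leftrightarrow j}).
\]
Taking the minimum over $j$, and noting that $B'$ is feasible in the definition of $\fullSMMSc[]$, yields $\fullSMMSc[]\ge(1-C)\SMMSc[]$.

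For part~2, let $(P_1,\dots,P_n)$ be an $\MMSi$-maximizing partition, so $v_i(P_j)\ge\MMSi$ for all $j$. Define a cyclic allocation
\[
A_j=P_j\cup P_{j+1}\cup\dots\cup P_{j+k-1},
\]
with indices taken modulo $n$. Each part $P_t$ appears in exactly the $k$ bundles $A_{t-k+1},\dots,A_t$, and these are distinct because $k\le n$. So $A$ is a fully-shared $k$-sharing allocation. Moreover, $v_i(A_j)=\sum_{s=0}^{k-1}v_i(P_{j+s})\ge k\cdot\MMSi$. By the displayed fact, $u_i(A_{i\leftrightarrow j})\ge k(1-C)\MMSi$ for every $j$, hence $\fullSMMSc[]\ge k(1-C)\MMSi$.

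I expect no step to be a real obstacle. The main care points are two. First, padding in part~1 only adds goods and never removes them, which is what makes the monotonicity $v_i(B'_j)\ge v_i(B_j)$ valid. Second, the cyclic construction in part~2 must place each good in exactly $k$ distinct bundles, which is exactly where $k\le n$ is needed.
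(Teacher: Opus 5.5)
Your proof is correct and follows the paper's argument essentially step for step. In part~1 you pad an SMMS maximizer to a fully-shared allocation whose bundles contain the original ones, then use that costs are nonnegative and at most $C$. In part~2 you use the same cyclic construction that gives each agent $k$ consecutive parts of an MMS-maximizing partition; your explicit remark that $k\le n$ is needed for both constructions is a point the paper leaves implicit.
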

\begin{proof}
    (1.) Let a \( k \)-sharing allocation \( A \) be an \SMMSc[] maximizer , and let \( B \) be a fully-shared allocation obtained by taking \( A \) and assigning the remaining shares of each good to arbitrary agents. Consequently, \( A_j \subseteq B_j \) for every \( j \in N \).
    Since $A$ is \SMMSc[] maximizer, \SMMSc[] is equal to
    \begin{align}\label{eq:smms-upper-bound}
    \min_{j \in N} u_i(A_{i \leftrightarrow j}) 
    &= \min_{j \in N} \sum_{g \in A_j} \left(1 - c_g\left(|N_g(A_{{i \leftrightarrow j}})|\right)\right) \cdot v_i(g) \notag \\
    &\leq \min_{j \in N} \sum_{g \in A_j} v_i(g)
    \end{align}
    where the last inequality follows from the fact that removing costs increases utility. Further, since $B$ is a fully-shared allocation, we have that \fullSMMSc[] is at least
    \begin{align}\label{eq:fullSMMS-lower-bound}
        \min_{j \in N} u_i(B_{i \leftrightarrow j}) 
        &= \min_{j \in N} \sum_{g \in B_j} (1 - {c_g(}|N_g(B_{{i \leftrightarrow j}})|{)}) \cdot v_i(g) \notag \\
        &\geq (1-C) \cdot \min_{j \in N} \sum_{g \in B_j} v_i(g)
    \end{align}
    where the final inequality uses that $C$ is the maximum cost. Finally, since $A_j \subseteq B_j$, we have
    \begin{equation*}
        (1-C) \cdot \min_{j \in N} \sum_{g \in B_j} v_i(g) \geq (1-C) \cdot \min_{j \in N} \sum_{g \in A_j} v_i(g)
    \end{equation*}
   which in combination with~\eqref{eq:smms-upper-bound} and ~\eqref{eq:fullSMMS-lower-bound} concludes the proof.

    (2.) Let a 1-sharing allocation $A \in \mathcal{A}_1^n$ be an $\MMSi$ maximizer. We construct a different fully-shared allocation \( B \) by giving each agent their own bundle from \( A \) and the next \( k - 1 \) consecutive bundles, looping around to the beginning if we run out. Without the costs, each agent $i$ values each bundle in $B$ more than $k$-times the least valuable bundle in $A$, i.e., $\forall j \in N$, we have
    \begin{equation}\label{eq:new-bundle-k-better}
        v_i(B_j) \geq k \cdot \min_{\ell \in N} v_i(A_\ell) \geq k \cdot \MMSi.
    \end{equation}
   Note that~\eqref{eq:fullSMMS-lower-bound} still holds for this allocation \( B \), since the inequality relied only on \( B \) being full-sharing. Combining~\eqref{eq:fullSMMS-lower-bound} with~\eqref{eq:new-bundle-k-better} therefore gives us: 

    \begin{equation*}
        \fullSMMSc[] \geq (1-C) \cdot \min_{j \in N}  v_i(B_j) \geq k {\cdot (1-C)} \cdot \MMSi.
    \end{equation*}
\end{proof}

By combining the equivalence between $\operatorname{CMMS}$ and $\fullSMMSc[][]$ with the upper bounds 
from Lemma~\ref{lem:fullSMMS-bounding}, we obtain approximation results for both MMS and SMMS.

\begin{proposition}\label{prop:approx-MMS-SMMS}
    Let $\alpha \in (0,1]$, and suppose that for every fair division instance with cardinality constraints, there exists an $\alpha$-$\operatorname{CMMS}$ allocation. Then, for any goods-based $k$-sharing model with maximum cost $C$, there exist:
    \begin{itemize}
        \item an $\alpha \cdot (1 - C)$-$\operatorname{SMMS}$ allocation, and
        \item an $\alpha \cdot k \cdot (1 - C)$-$\operatorname{MMS}$ allocation.
    \end{itemize}
\end{proposition}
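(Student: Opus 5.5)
The argument chains together Proposition~\ref{prop:smms-equiv-cmms} and Lemma~\ref{lem:fullSMMS-bounding}. Given a goods-based $k$-sharing instance $I$ with maximum cost $C$, I would first use Proposition~\ref{prop:smms-equiv-cmms} to build the cardinality-constrained instance $\tilde{I}=(N,\tilde{M},\tilde{v},b)$. By hypothesis, $\tilde{I}$ has an $\alpha$-$\operatorname{CMMS}$ allocation $\tilde{A}\in\mathcal{F}$, so $\tilde{v}_i(\tilde{A}_i)\ge \alpha\cdot\operatorname{CMMS}_i$ for every $i$. Pulling $\tilde{A}$ back through the bijection gives a fully-shared allocation $A\in\mathcal{A}_k^n(M)$ with $u_i(A)=\tilde{v}_i(\tilde{A}_i)$ for every $i$.

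The next step is the identity $\operatorname{CMMS}_i=\fullSMMSc[]$. To get it, I would check that the bijection respects the swap operation used in the SMMS definition: swapping the bundles of agents $i$ and $j$ in a fully-shared $B$ should correspond to swapping $\tilde{B}_i$ and $\tilde{B}_j$ in $\mathcal{F}$. This needs two facts. First, $|N_g(B_{i\leftrightarrow j})|=|N_g(B)|=k$, so in the goods-based model the costs do not change. Second, the cardinality constraints are agent-symmetric, so $\mathcal{F}$ is closed under swaps. Together these give $u_i(B_{i\leftrightarrow j})=\tilde{v}_i(\tilde{B}_j)$. Taking max over $B$ and min over $j$ then yields $\fullSMMSc[]=\operatorname{CMMS}_i$, which the paper states right after Proposition~\ref{prop:smms-equiv-cmms}. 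I would briefly justify it this way, relying on the construction in the appendix where each good becomes $k$ copies forming one type with budget $1$ and value $(1-c_g(k))v_i(g)$.

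Finally, I would combine this with Lemma~\ref{lem:fullSMMS-bounding}:
\[
u_i(A)\ \ge\ \alpha\cdot\fullSMMSc[]\ \ge\ \alpha(1-C)\cdot\SMMSc[],
\qquad
u_i(A)\ \ge\ \alpha\cdot\fullSMMSc[]\ \ge\ \alpha\, k\,(1-C)\cdot\MMSi .
\]
The same allocation $A$ therefore witnesses both bullets of the proposition.

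The main obstacle is making sure the bijection is utility-preserving not only for agent $i$'s own bundle but also for swapped bundles, since $\operatorname{SMMS}$ is defined through $B_{i\leftrightarrow j}$. This is exactly where full sharing matters, because every good has exactly $k$ holders before and after the swap. Everything else is a direct chaining of inequalities.
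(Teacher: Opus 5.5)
Your proposal is correct and follows the paper's route. You pull back an $\alpha$-$\operatorname{CMMS}$ allocation through the construction of Proposition~\ref{prop:smms-equiv-cmms}, use the identity $\operatorname{CMMS}_i=\overline{\operatorname{SMMS}}_i$, and then chain both parts of Lemma~\ref{lem:fullSMMS-bounding}. Your explicit check that the correspondence commutes with the swap $B_{i\leftrightarrow j}$ (sharer counts stay at $k$, and the budgets are agent-symmetric) fills in a step the paper leaves implicit.
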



The best known \( \alpha = 1/2 \)~\cite{Hummel2022} gives a \( \frac{1}{2} \)-SMMS guarantee in the cost-free model and \( \frac{1}{2k} \)-SMMS under equal-share. Also, while a larger $k$ leads to a worse approximation factor, the SMMS value may increase due to the greater flexibility in sharing.

Regarding MMS approximations, we note that \( \alpha = 1/2 \) does not improve upon the bag-filling algorithm (Proposition~\ref{prop:Bag-Filling}), which guarantees \( (k - 1)(1 - C) \)-MMS. However, for \( \alpha > \frac{k - 1}{k} \), the approach via \( \alpha \)-CMMS yields better guarantees—meaning that improvements occur even for \( \alpha = 1/2 + \varepsilon \) with arbitrarily small \( \varepsilon > 0 \).

\section{Conclusion}

In this paper, we studied how allowing limited sharing of indivisible goods affects fairness, specifically focusing on Maximin Share (MMS) fairness. By permitting goods to be shared among up to $k$ agents, we found that previously unattainable fairness guarantees become achievable. We introduced Sharing Maximin Share (SMMS) fairness and for both notions explored when fair solutions exist, revealing important trade-offs between fairness, sharing costs, and the maximum number of agents allowed to share each good.

{Our analysis highlighted the challenges arising when agents have different opinions on sharing—some may prefer to share goods, while others may not—particularly under equal-share. These differences underline the difficulty of designing fair allocation methods acceptable to all agents.}

Our findings show that allowing {bounded cost-sensitive} sharing greatly improves the practicality of fair allocations, {provided that the costs of sharing are not excessive,} making {our proposed mechanisms} suitable for scenarios where sharing is possible. {Table~\ref{tab:approx-ratio} could be viewed as a baseline for future work refining the boundary where sharing can or cannot lead to improved MMS guarantees depending on the parameters $k$ and $C$}. A key open question is determining the lower bound on the number of agents $k$ needed to achieve fairness under \textit{generous cost}{\textit{-sharing} model}.

{The theoretical results established in this paper open up new research directions for fair resource allocation when a limited sharing of resources are allowed in multi-agent environments.} {It} opens the door to exploring various cost-sharing models that reflect structured, real-world constraints. One promising direction is to consider models where the cost of sharing is determined not by the goods themselves, but by the group {of other agents} they are sharing with---for example, agents may differ in how effectively they can make use of shared resources, with some incurring lower costs when participating in sharing, and costs being divided equally when similarly capable agents share the same good.



\begin{acks}
This work was carried out while H.S. and M.C. were participants in the 2025 DIMACS REU program at Rutgers University, supported by the NSF grant CCF-2447342. H.S. and M.C. were partly supported by the RSJ Foundation and by the Department of Applied Mathematics and the Computer Science Institute of Charles University. H.S. was supported by the Horizon Europe Programme under Grant Agreement No. 101183743 (AGATE).
\end{acks}



\balance
\bibliographystyle{ACM-Reference-Format} 
\bibliography{subs}


\clearpage
\appendix

\section{Details of the Algorithm~\ref{alg:modified-bag-filling}}\label{app:bag-filling}

At the Phase 1 of Algorithm~\ref{alg:modified-bag-filling}, large goods are given to respective agents with an assumption, that this process will reduce the problem of MMS allocation between a smaller number of agents with smaller number of small goods. This reduction is justified by monotonicity of the MMS, a result which was already proved for example by~\citet{Feige2021}. For the sake of completeness, we provide the proof.

\begin{proposition}[Monotonicity of MMS]\label{prop:monotonicity-of-MMS}
    Let $I = (N, M, k, v, c)$ be an instance of the fair division problem. For every $i \in N$ and every $g \in M$, it holds
    \[\MMS[i][n-1][M \setminus \{g\}]\geq \MMS[i][n][M].\]
    
\end{proposition}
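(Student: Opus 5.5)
The plan is the standard "remove a good and merge its bundle" argument. Fix agent $i$ and let $P=(P_1,\dots,P_n)\in\mathcal{A}^n(M)$ be a partition attaining $\MMS[i][n][M]$, so $v_i(P_j)\ge \MMS[i][n][M]$ for every $j\in[n]$. The good $g$ lies in exactly one bundle, say $P_\ell$. The idea is to build an $(n-1)$-partition of $M\setminus\{g\}$ in which every bundle is still worth at least $\MMS[i][n][M]$ to agent $i$. Since $\MMS[i][n-1][M\setminus\{g\}]$ is a maximum over all such partitions, this immediately gives the claim.

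\textbf{Construction.} Discard $g$ and dissolve the rest of its bundle into another bundle.

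\begin{itemize}
\item Pick any index $t\neq \ell$; this needs $n\ge 2$, which is the only meaningful case since $\MMS[i][n-1]$ must be defined.
\item Set $Q_t := P_t\cup (P_\ell\setminus\{g\})$.
\item Set $Q_j := P_j$ for all $j\notin\{\ell,t\}$.
\end{itemize}

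The resulting $n-1$ sets are pairwise disjoint. Their union is exactly $M\setminus\{g\}$, because the only item dropped is $g$. Hence they form an element of $\mathcal{A}^{n-1}(M\setminus\{g\})$.

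\textbf{Verifying the bound.} Because valuations are additive and non-negative, $v_i(Q_t)\ge v_i(P_t)\ge \MMS[i][n][M]$. Every other $Q_j$ equals some $P_j$, so it is also worth at least $\MMS[i][n][M]$. Therefore
\[
\MMS[i][n-1][M\setminus\{g\}]\;\ge\;\min_{j} v_i(Q_j)\;\ge\;\MMS[i][n][M].
\]

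\textbf{Obstacles and how it is used.} There is no substantial obstacle. The only care needed is the degenerate case $P_\ell=\{g\}$ (then $Q_t=P_t$ and the argument is unchanged) and stating the use of non-negativity of $v_i$. Note that the statement holds for every $g$, not only large goods. In Phase 1 it is applied after giving $g$ to an agent $i'$ with $v_{i'}(g)\ge v_{i'}(\tilde M)/|\tilde N|\ge \MMS[i'][|\tilde N|][\tilde M]$. It then guarantees that each remaining agent's MMS in the reduced instance does not drop, and iterating the step yields the Phase 1 guarantee.
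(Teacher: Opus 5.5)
Your proposal is correct and follows the paper's argument. Take a partition attaining $\operatorname{MMS}_i^{n}(M)$, delete $g$ from its bundle, and merge what remains of that bundle into another bundle, so that by non-negativity no bundle's value drops below $\operatorname{MMS}_i^{n}(M)$. Your write-up is if anything more careful than the paper's, since it makes explicit the merge index $t\neq\ell$, the degenerate case $P_\ell=\{g\}$, and the requirement $n\ge 2$.
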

\begin{proof}
    Let $A = (A_1,\dots,A_n)$ be an $\MMSi$ maximizer and assume without loss of generality that $g \in A_n$, $n\neq i$. Then for allocation of $n-1$ agents, $B = (A_1,\dots,A_{n-2},A_{n-1} \cup A_n \setminus \{g\})$, the minimal bundle is at least the minimal bundle of $A$, proving the monotonicity.
\end{proof}

At the beginning of Phase 2 of Algorithm~\ref{alg:modified-bag-filling}, $|\tilde{N}|\geq 1$ and, unless $\MMS[i]=0$ in the original instance, the value of the remaining goods is $v_i(\tilde{M})>0$ for all $i\in\tilde{N}$. Therefore, we scale valuations of remaining agents by factor $\frac{|\tilde{N}|}{v_i(\tilde{M})}>0$, so that the MMS value of every remaining agent is bounded from above by $1$. The following proposition shows that this does not ruin the approximation guarantee. Corresponding result for classical setting was used and proved by~\citet{garg2019approximating}, our proof is a natural generalization of their proof for $k$-sharing setting.

\begin{proposition}[Scale Invariance]
    Let $A=(A_1,\dots,A_n)$ be an $\alpha$-MMS $k$-sharing allocation for instance $I = (N, M, v,k, c)$. If valuations are scaled by a positive factor $r>0$, i.e., $v^r_i(g) = r \cdot v_i(g)$ for every $i \in N$, allocation $A$ remains $\alpha$-MMS in the scaled instance $I^{r} = (N, M, k, v^r,c)$.
\end{proposition}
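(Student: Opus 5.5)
The plan is to show that both sides of the $\alpha$-MMS inequality scale by the same factor $r$. Then the ratio condition $u_i(A) \geq \alpha \cdot \MMS[i]$ is preserved.

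First, I will check how utilities transform. Under $v^r$, the utility of agent $i$ for the fixed allocation $A$ is
\[
u_i^r(A) = \sum_{g \in A_i} \big[1 - c_{i,g}(N_g(A))\big] \cdot r \, v_i(g) = r \cdot u_i(A).
\]
This holds because the cost terms $c_{i,g}(N_g(A))$ depend only on the allocation, and the allocation is the same in both instances. In particular, the costs do not depend on the valuations.

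Second, I will show that $\MMS[i]$ scales by $r$. The set of feasible partitions $\mathcal{A}^n(M)$ does not depend on the valuations. For each partition $(B_1,\dots,B_n)$, additivity gives $v^r_i(B_j) = r\, v_i(B_j)$. Since $r>0$, multiplying by $r$ commutes with both $\min_j$ and $\max_B$. Hence the scaled maximin share equals $r \cdot \MMS[i]$, and the same partition attains the maximum in both instances.

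Combining the two steps, $u_i^r(A) = r\,u_i(A) \geq r \alpha \MMS[i]$, which is exactly $\alpha$ times the scaled MMS value. This holds for every $i$, so $A$ is $\alpha$-MMS in $I^r$.

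I do not expect a real obstacle. The only point that needs care is confirming that the cost model is valuation-independent, so that the factor $[1-c]$ is unchanged by scaling. In the algorithm the scaling factor $r_i = |\tilde N|/v_i(\tilde M)$ differs per agent. Since each agent's guarantee involves only her own $v_i$, the argument applies agent by agent, and I will remark on this explicitly.
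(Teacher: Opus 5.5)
Your proof is correct and follows the same route as the paper: show $u_i^r(A) = r \cdot u_i(A)$ because the cost factors depend only on the allocation, deduce that the maximin share also scales by $r$, and combine to get $u_i^r(A) \geq r\alpha \operatorname{MMS}_i$, which is $\alpha$ times the scaled maximin share. Two small points where you improve on the paper: you justify the scaling of the maximin share directly, via additivity over the valuation-independent set of partitions, whereas the paper compresses this into a one-line ``therefore''; and your agent-by-agent remark covers the per-agent factors $|\tilde N|/v_i(\tilde M)$ that Phase~2 actually uses.
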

\begin{proof}
    Let $M_i$ and $M^{r}_i$ denote the MMS value of agent $i$ in instance $I$ and $I^{r}$ respectively. For any $k$-sharing allocation $A \in \mathcal{A}_k^n$, we have
    \begin{align*}
    u_i^r(A) &= \sum_{g \in A_i} (1 - c_{i,g}(N_g(A))) \cdot v_i^r(g) \\
             &= \sum_{g \in A_i} (1 - c_{i,g}(N_g(A))) \cdot r \cdot v_i(g) \\
             &= r \sum_{g \in A_i} (1 - c_{i,g}(N_g(A))) \cdot v_i(g) \\
             &= r \cdot u_i(A).
    \end{align*}
    Therefore $M^r_i = r \cdot M_i$. Let $A$ be an $\alpha$-MMS $k$-sharing allocation for instance $I$. Since $u^r_i(A) = r \cdot u_i(A) \geq r \cdot \alpha \cdot M_i = \alpha \cdot M^r_i$, $A$ is an $\alpha$-MMS allocation even in the scaled instance $I^r$.
\end{proof}





\section{SMMS with Two Agents}\label{app:proof-prop3}

\begin{repproposition}
    For any generous goods-based $2$-sharing instance $(N,M,k,v,c)$ with $|N|=2$, an SMMS allocation always exists.

    \begin{proof}
    We show that the 2-sharing allocation \( C = (M, M) \), in which all goods are assigned to both agents, is an SMMS allocation. To this end, let \( A = (B_1 \cup S, B_2 \cup S) \) be an SMMS maximizer for agent $i$ where \( B_1 \cap B_2 = \emptyset \), and suppose agent \( i \in N \) receives the smaller share, i.e., \( v_i(B_i) \leq v_i(B_j) \). This means that
    \[
    v_i(B_i) \leq \frac{v_i(B_i) + v_i(B_j)}{2} \leq u_i(D),
    \]
    where in $2$-sharing allocation \( D \), agent \( i \) receives only the set \( B_i \cup B_j \), shared entirely with agent \( j \), and no other goods. The difference between \( u_i(D) \) and \( u_i(C) \) is exactly the utility agent \( i \) receives from the shared goods \( S \), which also accounts for the difference between \( u_i(A) \) and \( v_i(B_i) \). It therefore follows that \( u_i(A) \leq u_i(C) \), and so the fully-shared allocation \( C \) satisfies the SMMS condition.
\end{proof}
\end{repproposition}

\section{SMMS under Identical Utilities}\label{app:proof-prop4}
We initiate this section by proving Proposition~\ref{prop:smms-for-identical-agents}, before we show that in general, identical utilities appear only in goods-based cost-sharing models.
\begin{repproposition3}
    For any goods-based cost-sharing instance $(N,M,k,v,c)$ with identical valuations $v_i(g)=v_j(g)$ for $i\neq j$ and for all $g\in M$, an SMMS allocation always exists.
\end{repproposition3}

First, we can observe that if all agents have identical utilities, an SMMS allocation always exists for any instance $(N,M,k,v,c)$. The proof is straightforward, relying on the fact that any allocation that is an SMMS maximizer for one agent is an SMMS maximizer for all other agents, therefore it is an SMMS allocation.

\begin{observation}\label{obs:dentical-utilities}
    For any instance $(N,M,k,v,c)$ with identical utilities, i.e., $u_i(A)=u_j(A_{i \leftrightarrow j})$ for all $A  \in \mathcal{A}_k^n$ and all $i, j \in N$, an SMMS allocation always exists.
\end{observation}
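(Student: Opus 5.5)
The plan is to show that, under the identical-utilities hypothesis $u_i(A)=u_j(A_{i\leftrightarrow j})$ for all $A\in\mathcal{A}_k^n(M)$ and all $i,j\in N$, the agents' SMMS objectives coincide. Once that is established, any maximizer for one agent serves every agent. Fix an arbitrary reference agent $i_0$. Since $\mathcal{A}_k^n(M)$ is finite (each good goes to a set of at most $k$ agents), the maximum in \eqref{eq:SMMS} is attained, and I will pick a maximizer $B^*$ of $\SMMS[i_0]$.

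\textbf{Step 1 (objectives coincide).} For any agent $i$ and any $B\in\mathcal{A}_k^n(M)$ I want
\[
\min_{j\in N} u_i(B_{i\leftrightarrow j})=\min_{j\in N} u_{i_0}(B_{i_0\leftrightarrow j}).
\]
The left side is $\min_j u_i(B_{i\leftrightarrow j})$. Apply the hypothesis to the allocation $B_{i\leftrightarrow j}$ with the pair $(i,i_0)$: this gives $u_i(B_{i\leftrightarrow j})=u_{i_0}\bigl((B_{i\leftrightarrow j})_{i\leftrightarrow i_0}\bigr)$.

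The allocation $(B_{i\leftrightarrow j})_{i\leftrightarrow i_0}$ gives agent $i_0$ the bundle $B_j$. More precisely, $i_0$ holds whatever $i$ held in $B_{i\leftrightarrow j}$, namely $B_j$, with the edge cases $j=i$ or $j=i_0$ checked separately. As $j$ ranges over $N$, the multiset of bundles handed to $i_0$ in these allocations is $\{B_j\}_{j\in N}$. This is exactly the multiset of bundles $i_0$ receives across $B_{i_0\leftrightarrow j}$.

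The subtlety is that the other agents' bundles may be permuted differently in the two families. If utilities depended on who holds the other bundles, the values could differ. To handle this, I will use the hypothesis once more, in the form of the consequence that $u_{i_0}$ of an allocation depends only on the bundle $i_0$ holds and on the multiset of all bundles. I would derive this consequence as follows. Applying the hypothesis twice through an intermediate agent lets one swap two other agents' bundles without changing $u_{i_0}$, and arbitrary permutations are generated by transpositions.

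\textbf{Step 2 (conclude).} Step 1 gives $\SMMS[i]=\SMMS[i_0]$ for every $i$, and shows that $B^*$ attains $\min_j u_i(B^*_{i\leftrightarrow j})=\SMMS[i]$ for every $i$. Hence every agent's worst bundle in $B^*$ meets their SMMS. An SMMS allocation is then obtained by assigning bundles of $B^*$ to agents: with $B^*$ itself, each agent $i$ gets $u_i(B^*)\ge \min_j u_i(B^*_{i\leftrightarrow j})=\SMMS[i]$, since $j=i$ is one term in the minimum. So $B^*$ is the required allocation.

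\textbf{Main obstacle.} The hardest part is the permutation-invariance argument in Step 1: justifying that $u_{i_0}$ is unchanged when bundles of agents other than $i_0$ are permuted. I would first try to derive it directly from the stated hypothesis by composing swaps. Failing that, I would note that in the paper's intended goods-based setting, costs depend only on $|N_g|$, so the invariance is immediate.
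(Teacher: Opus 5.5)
Your proposal is correct and takes the same route as the paper, which simply asserts that an SMMS maximizer for one agent is a maximizer for every agent and is therefore, via the $j=i$ term, an SMMS allocation. The ``main obstacle'' you flag goes away if you apply the hypothesis to $A=B_{i\leftrightarrow j}$: this gives $u_i(B_{i\leftrightarrow j})=u_j(B)$, hence $\min_{j} u_i(B_{i\leftrightarrow j})=\min_{j} u_j(B)$ for every $i$, so no permutation-invariance lemma is needed (your swap argument does work, but it takes three applications of the hypothesis, along the route $i_0\to a\to b\to i_0$, not two).
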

    
We further establish in Proposition~\ref{prop:identical-utilities} that in the $k$-sharing setting identical utilities occur precisely when agents have identical valuations and the cost structure is goods-based.

\begin{proposition}\label{prop:identical-utilities}
In a $k$-sharing instance $(N, M, k,v,c)$, all agents have identical utility functions, i.e.,
\[u_i(A) = u_j(A_{i \leftrightarrow j})\text{ for all }i,j \in N, A  \in \mathcal{A}_k^n\]
 if and only if all agents have identical valuations and the costs are goods-based.
\end{proposition}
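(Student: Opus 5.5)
We prove the two directions separately. Throughout we use the general utility $u_i(A)=\sum_{g\in A_i}[1-c_{i,g}(N_g(A))]\,v_i(g)$. We also note how the swap acts: in $A_{i\leftrightarrow j}$, agent $i$ holds $A_j$, and for each $g\in A_j$ the sharing set $N_g(A_{i\leftrightarrow j})$ is $N_g(A)$ with the labels $i$ and $j$ exchanged. Call this set $\sigma_{ij}(N_g(A))$. Then
\[
u_i(A_{i\leftrightarrow j})=\sum_{g\in A_j}\big[1-c_{i,g}(\sigma_{ij}(N_g(A)))\big]\,v_i(g),
\qquad
u_j(A)=\sum_{g\in A_j}\big[1-c_{j,g}(N_g(A))\big]\,v_j(g).
\]
The two expressions run over the same goods. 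The condition $u_j(A)=u_i(A_{i\leftrightarrow j})$ therefore says that these sums agree for every allocation $A$.

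\textbf{If direction.} Suppose $v_i=v_j=v$ and $c_{i,g}(S)=c_g(|S|)$ for all $i,g,S$. The relabelling $\sigma_{ij}$ preserves cardinality, so each summand in the first sum equals the corresponding summand in the second. The identity follows immediately.

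\textbf{Only-if direction.} Here we use test allocations that isolate a single good. Since completeness forces every good to be allocated, we place all other goods in a way that does not involve the good of interest; their contributions are identical on both sides anyway, because they are summed term-by-term over the same goods.

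\begin{enumerate}
\item \emph{Identical valuations.} Take $A$ in which agent $j$ alone receives $g$, so $N_g(A)=\{j\}$. By \eqref{eq:zero-costs} the cost is zero on both sides. The contributions of $g$ are then $v_j(g)$ and $v_i(g)$. To cancel the other goods, compare $A$ with the allocation $A'$ obtained by moving $g$ to some agent $\ell\notin\{i,j\}$ (or by giving $g$ to $j$ versus $i$ when $n=2$). Subtracting the identities for $A$ and $A'$ leaves $v_j(g)=v_i(g)$.
\item \emph{Goods-based costs.} For any $S\ni j$ with $|S|\le k$ and any good $g$ with $v(g)>0$, put $N_g(A)=S$. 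The identity, again differenced against a reference allocation so that only $g$'s term survives, gives $c_{j,g}(S)=c_{i,g}(\sigma_{ij}(S))$. These are symmetric relations under agent transpositions, and transpositions generate all permutations. Hence $c_{\pi(j),g}(\pi(S))=c_{j,g}(S)$ for every permutation $\pi$. Any two pairs $(j,S)$ and $(j',S')$ with $j\in S$, $j'\in S'$, $|S|=|S'|$ are related by such a permutation. So the cost depends only on $g$ and $|S|$, i.e.\ $c_{j,g}(S)=c_g(|S|)$.
\end{enumerate}

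\textbf{Main obstacles.} The main obstacle is isolating a single good's term when completeness forces all goods to be allocated. The differencing trick handles this, since all terms for the other goods coincide term-by-term. There is also a caveat: costs attached to zero-valued goods are unidentifiable. The statement should therefore be read with goods-based costs understood up to goods with $v(g)=0$, where cost values are irrelevant to utility. Equivalently, we can say such costs may be redefined to be goods-based without changing any utility.
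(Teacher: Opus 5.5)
Your proof is correct and takes essentially the paper's route. Single-good test allocations give $v_i(g)=v_j(g)$ and the relations $c_{j,g}(S)=c_{i,g}(\sigma_{ij}(S))$ (the paper's two observations), and these are chained to show the cost depends only on $|S|$. Your relations only cover transpositions that move the holder $j$, so the paper's step of successively swapping agents of $S\setminus S'$ for agents of $S'\setminus S$ is the chaining your appeal to ``transpositions generate all permutations'' actually needs.

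The differencing trick is unnecessary: for $n\ge 2$ you can simply take $A_j=\{g\}$ and give every other good to other agents. Your caveat about goods with $v(g)=0$ is a real refinement of the statement, which the paper's proof silently skips when it cancels $v(g)$.
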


\begin{proof}
($\Leftarrow$) If all agents have identical valuations and the costs are goods-based, then utility depends only on the allocation structure, not the agent's identity. Therefore, all agents have identical utility functions.

($\Rightarrow$) Assume that all agents have identical utility functions. Fix a good \( g \in M \), and consider any 1-sharing allocation \( A \) in which agent \( i \) is allocated only good \( g \). Then for any \( i, j \in N \), we have
\[
    v_i(g) = v_i(A_i) = u_i(A) = u_j(A_{i \leftrightarrow j}) = v_j(A_i) = v_j(g),
\]
so all agents have identical valuations.

Now consider two groups \( S, S' \subseteq N \) of the same size, along with agents \( i \in S \) and \( j \in S' \). To show that \( c_{i,g}(S) = c_{j,g}(S') \), consider an arbitrary set \( T \subseteq N \) of size at most \( k \), and an allocation \( A \) in which good \( g \) is shared by all agents in \( T \), while the remaining goods are allocated arbitrarily among the agents in \( N \setminus T \) {, while preserving $k$-sharing
}. Further, fix $x \in T$ and $y \in N \setminus x$. We make two observations based on whether \( y \in T \), the relation \( u_x(A) = u_y(A_{x \leftrightarrow y}) \), and the fact that both agents receive only good \( g \) in the respective allocations:

\begin{enumerate}
    \item If \( y \in T \), then \[(1 - c_{x,g}(T)) \cdot v(g) = (1 - c_{y,g}(T)) \cdot v(g)\]
    \[\implies c_{x,g}(T) = c_{y,g}(T).\]
    \item If \( y \notin T \), then 
    \[(1 - c_{x,g}(T)) \cdot v(g) = (1 - c_{y,g}(T \setminus x \cup y)) \cdot v(g)\] \[\implies c_{x,g}(T) = c_{y,g}(T \setminus x \cup y).\]
\end{enumerate}
The first observation means that when a group of agents shares a good among themselves, each agent incurs the same cost for the sharing. The second observation, in combination with the first, implies that if two equally-sized groups differ by exactly one agent (i.e., one agent from one group can be swapped for one from the other), then the cost of sharing the good in the first group is the same for each of its agent as the cost of sharing the good in the other group for each of its agent. Extending on this argument, we can gradually swap agents \( x \in S \setminus S' \) with agents \( y \in S' \setminus S \) to transform \( S \) into \( S' \), preserving the cost at each step. This shows that agents in \( S \) incur the same cost for sharing good \( g \) as those in \( S' \), which concludes the proof.

\end{proof}

Proposition~\ref{prop:smms-for-identical-agents} simply follows from the Observation~\ref{obs:dentical-utilities} and Proposition~\ref{prop:identical-utilities}.

\section{Comparing SMMS Across Cost Models}\label{app:comparing-smms}

Before turning to structural relationships between fairness notions, we briefly compare the SMMS values obtained under different cost-sharing models. In particular, we consider how the cost-free and generous models relate.

\begin{lemma}
    Let \SMMSc[cf] be the SMMS value of agent $i$ in the cost-free model and \SMMSc the SMMS value of agent $i$ in a generous cost-sharing model with maximal cost for sharing a good being $C$. Then it holds
    \begin{equation*}
        \SMMSc[cf] \geq \SMMSc \geq (1-C)\SMMSc[cf].
    \end{equation*}
\end{lemma}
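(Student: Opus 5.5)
The plan is to prove the two inequalities separately, each by comparing the same $k$-sharing allocation under the two cost models.

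\emph{Upper bound} $\SMMSc[cf] \geq \SMMSc$. Let $B \in \mathcal{A}_k^n(M)$ be a maximizer in the definition of $\SMMSc$ for agent $i$ under the generous model. For every $j \in N$, the swapped allocation $B_{i \leftrightarrow j}$ has the same sharing structure as $B$. So
\[
u_i(B_{i\leftrightarrow j}) = \sum_{g\in B_j}\big[1-c_g(|N_g(B)|)\big]v_i(g) \leq \sum_{g \in B_j} v_i(g) = u_i^{cf}(B_{i\leftrightarrow j}),
\]
because every cost lies in $[0,1]$. Taking the minimum over $j$ gives $\SMMSc \leq \min_j u_i^{cf}(B_{i\leftrightarrow j})$. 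The same $B$ is also feasible in the cost-free maximization, which is taken over the same set $\mathcal{A}_k^n(M)$, so the right-hand side is at most $\SMMSc[cf]$.

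\emph{Lower bound} $\SMMSc \geq (1-C)\SMMSc[cf]$. Let $B'$ be a maximizer of the cost-free SMMS for agent $i$. Evaluate $B'$ under the generous model. Every cost satisfies $c_g(|N_g(B')|) \leq C$, so for every $j$,
\[
u_i(B'_{i\leftrightarrow j}) \geq (1-C)\sum_{g\in B'_j} v_i(g) = (1-C)\,u_i^{cf}(B'_{i\leftrightarrow j}).
\]
Taking the minimum over $j$ gives $\min_j u_i(B'_{i\leftrightarrow j}) \geq (1-C)\SMMSc[cf]$. Since $B'$ is a feasible $k$-sharing allocation, the left-hand side is at most $\SMMSc$.

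The only delicate point is checking that swapping bundles $i \leftrightarrow j$ does not change the group sizes $|N_g(\cdot)|$. It does not, because the multiset of bundles is unchanged. Under a goods-based cost, the cost depends only on these group sizes, so the displayed expressions for $u_i(B_{i\leftrightarrow j})$ are valid. Beyond that, both directions are direct pointwise comparisons that use $0 \le c_g \le C$, and the generous assumption is not needed beyond the costs being bounded by $C$.
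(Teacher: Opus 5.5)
Your proof is correct and follows the paper's argument: for each inequality you evaluate a single maximizer (the generous-model maximizer for the first, the cost-free maximizer for the second) under the other cost model and compare term by term using $0 \le c_g(\cdot) \le C$. The paper also takes the cost-free maximizer to be fully shared, which is unnecessary, and your explicit check that the swap $i \leftrightarrow j$ preserves $|N_g|$ spells out a step the paper leaves implicit.
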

\begin{proof}
    For the first inequality, let \(k\)-sharing allocation \( A \) be an \SMMSc maximizer. Since the cost-free model imposes no cost on sharing, the utility of agent~\(i\) from allocation \( A \) is at least as high in the cost-free model as in the model with costs. Therefore, \(\SMMSc[cf] \geq \SMMSc\).

    For the second inequality, let  full-sharing allocation \( A \) be an $\SMMSc[cf]$ maximizer. In the model with sharing costs bounded by \( C \), each good contributes at least a \( (1 - C) \) fraction of its value compared to the cost-free model. Since utility is additive, the total utility of agent~\(i\) in this model is at least \( (1 - C) \SMMSc[cf] \). Hence, \(\SMMSc \geq (1 - C) \SMMSc[cf]\), as claimed.
\end{proof}

\ifhighlight
{\color{gray}
\section{\sout{Incompatibility of MMS and SMMS for Generous Cost-Sharing}}\label{app:smms-superior}

\sout{
In~\citet{Feige2021}, they constructed an instance with \( n = 3 \) agents and \( m = 9 \) goods in which no $1$-sharing MMS allocation exists. The instance is defined by agent-specific valuations over the goods, represented as \(3 \times 3\) matrices:
}
\[\xcancel{
\scalebox{0.85}{$
V_1 = \begin{bmatrix}
1 & 16 & 23 \\
26 & 4 & 10 \\
12 & 19 & 9
\end{bmatrix},
V_2 = \begin{bmatrix}
1 & 16 & 22 \\
26 & 4 & 9 \\
13 & 20 & 9
\end{bmatrix},
V_3 = \begin{bmatrix}
1 & 15 & 23 \\
25 & 4 & 10 \\
13 & 20 & 9
\end{bmatrix}.
$}
}\]
\sout{
The goods are indexed from 1 to 9 in row-major order. Although no MMS allocation exists under $1$-sharing, we observe that SMMS allocations do exist for both the \emph{equal-share} and \emph{cost-free} cost-sharing models even under $2$-sharing.
}

\paragraph{\sout{Equal-Share SMMS Allocation}}

\sout{
In the equal-share model, the sum of utilities of all bundles, given an allocation $A$ is equal to $v_i(M)$ for every agent $i$. As $v_i(M)=120$, $\SMMS \leq 40$. At the same time, we have the following allocation with the corresponding utility, proving $\SMMS = 40$ for every agent $i$:}

\begin{itemize}
    \item \sout{Agent 1 receives goods \( \{1,2,3\} \), utility: \textbf{40.00},}
    \item \sout{Agent 2 receives goods \( \{4,5,8\} \), utility: \textbf{40.00},}
    \item \sout{Agent 3 receives goods \( \{6,7,8,9\} \), utility: \textbf{42.00},}
\end{itemize}

\sout{
This is thus SMMS allocation. Notice that it was enough to share only one of the goods to achieve SMMS.
}

\paragraph{\sout{Cost-Free SMMS Allocation}}

\sout{
As the utility of the least bundle in equal-share is bounded by 40, in cost-free sharing, the value of the least bundles is valued at most twice this price, i.e., $\SMMS \leq 80$ for every $i \in N$. The following allocation:}

\begin{itemize}
    \item \sout{Agent 1 receives goods \( \{1,2,3,4,5,6\} \), utility: \textbf{80.00},}
    \item \sout{Agent 2 receives goods \( \{2,3,7,8,9\} \), utility: \textbf{80.00},}
    \item \sout{Agent 3 receives goods \( \{4,5,6,7,8,9\} \), utility: \textbf{81.00},}
\end{itemize}

\sout{
is thus an SMMS allocation.}

\sout{
A similar behaviour can be observed for the example from}~\citet{Kurokawa2018}\sout{, where they constructed an instance with $n=3$ and $m=12$ goods, in which no 1-sharing MMS allocation exists. The valuation of agent $i \in N$ for a good $(k,\ell) \in M$ is given by}
\[
\xcancel{
v_i(k,\ell) = 10^6 \cdot S_{k,\ell} + 10^3 \cdot T_{k,\ell} + E^i_{k,\ell},}
\]
\sout{where}
\[\xcancel{
S = \begin{bmatrix}
1 & 1 & 1 & 1 \\
1 & 1 & 1 & 1 \\
1 & 1 & 1 & 1
\end{bmatrix}, \quad
T = \begin{bmatrix}
17 & 25 & 12 & 1 \\
2 & 22 & 3 & 28 \\
11 & 0 & 21 & 23
\end{bmatrix}}
\]
\sout{and}
\[
\xcancel{
E^1 = \left[\begin{smallmatrix}
3 & -1 & -1 & -1 \\
0 & 0 & 0 & 0 \\
0 & 0 & 0 & 0
\end{smallmatrix}\right],
E^2 = \left[\begin{smallmatrix}
3 & -1 & 0 & 0 \\
-1 & 0 & 0 & 0 \\
-1 & 0 & 0 & 0
\end{smallmatrix}\right],
E^3 = \left[\begin{smallmatrix}
3 & 0 & -1 & 0 \\
0 & 0 & -1 & 0 \\
0 & 0 & 0 & -1
\end{smallmatrix}\right].}
\]

\paragraph{\sout{Equal-Share SMMS Allocation}} \sout{For each agent, $v_i(M) = 12,165,000$ for every $i \in N$, thus $\SMMS \leq \frac{v_i(M)}{3} = 4,055,000$. The following allocation:}

\begin{itemize}
    \item \sout{Agent~1 - \( \{1, 2, 3, 5, 10\} \), utility: \textbf{4,055,001},}
    \item \sout{Agent~2 - \( \{4, 7, 8, 12\} \), utility: \textbf{4,055,000},}
    \item \sout{Agent~3 - \( \{5, 6, 9, 10, 11\} \), utility: \textbf{4,055,000}.}
\end{itemize}
\sout{is thus an SMMS allocation.}

\paragraph{\sout{Cost-Free SMMS Allocation}} \sout{Once again, assuming the SMMS value in the cost-free model is at most twice the SMMS value in equal share, we get $\SMMS \leq  8,110,000$. The following allocation is thus an SMMS allocation:}

\begin{itemize}
    \item \sout{Agent 1 - \( \{1, 2, 3, 5, 7, 8, 10, 12\} \), utility: \textbf{8,110,001},}
    \item \sout{Agent 2 - \( \{1, 3, 4, 6, 7, 9, 11, 12\} \), utility: \textbf{8,110,001},}
    \item \sout{Agent 3 - \( \{2, 4, 5, 6, 8, 9, 10, 11\} \), utility: \textbf{8,110,000}.}
\end{itemize}
}
\fi

\section{Relationship between SMMS and CMMS}\label{app:proof-prop5}

\begin{repproposition2}

For every instance $I=(N, M, u)$ of the goods-based $k$-sharing problem, there exists an instance $\tilde{I} = (N, \tilde{M}, \tilde{v}, b)$ of a fair allocation under cardinality constraints such that fully-shared allocations $A \in \mathcal{A}_k^n$ correspond bijectively to feasible allocations $\tilde{A} \in \mathcal{F}$, and the bijection preserves agents' utilities: $\forall i \in N, u_i(A) = \tilde{v}_i(\tilde{A}_i)$.

\begin{proof}
    Let $(N,M,k,v,c)$ be an instance of a cost-free $k$-sharing problem  with $M = \{g_1, \dots g_m\}$. Let us now define an instance of fair division problem under cardinality constraints $(N,\tilde{M},\tilde{v},b)$ satisfying $\tilde{M} = \{G_1,\dots, G_m\}$ with $G_i = \{g_i^j \mid j \in [k]\}$, the unit budget profile $b = (1,\dots,1)$, and $\tilde{v}_i(g_x^y) = (1-c_g(k))v_i(g_x)$ for every $x \in [m]$, $y \in [k]$, $i\in N$.

    For every allocation $\tilde{A} \in \mathcal{F}$ (feasible according to the cardinality constraints), there is an allocation $A \in \mathcal{A}_k^n(M)$, defined as $A_i = \{g_i \mid g_i^j \in \tilde{A}_i\}$ for every $i \in N$. For both of these allocations, $\tilde{v}_i(\tilde{A}_j) = v_i(A_j)$, as
    \begin{equation*}
        \tilde{v}_i(\tilde{A}_j) = \sum_{g_x^y \in \tilde{A}_j}\tilde{v}_i(g_x^y) = \sum_{g_x \in A_j}(1-c_g(k))v_i(g_x) = u_i(A_j).
    \end{equation*}

    Similarly, for every instance of fair division problem under cardinality constraints $(N,\tilde{M},b,\tilde{v})$ satisfying $\tilde{M} = \{C_1,\dots,C_\ell\}$ with $|C_i| = |C_\ell| = k$, $b = (1,\dots,1)$ and $\tilde{v}_i(g_1) = \tilde{v}_i(g_2)$ for every $i \in N$, $g_1,g_2 \in C_j$, $j \in [\ell]$, one can construct an instance of the goods-based $k$-sharing problem $(N,M,k,v,c)$ where goods in $M$ are the sets $C_i$ themselves and the utility of $v_i(C_i) = \tilde{v}_i(g)/(1-c_g(k))$ for any $g \in C_i$. For such a pair of instances, any full-$k$-sharing allocations (sharing all goods among exactly $k$ agents) correspond to feasible allocations in $\mathcal{F}$ with $\tilde{v}_i(\tilde{A}_j) = u_i(A_j)$ for every $A$ and $\tilde{A}$.
\end{proof}

\end{repproposition2}


\end{document}
